  \providecommand\BibTeX{{%
    \normalfont B\kern-0.5em{\scshape i\kern-0.25em b}\kern-0.8em\TeX}}}
\newtheorem{theorem}{Theorem}[section]
\newtheorem{lemma}[theorem]{Lemma}
\def\G{\mathcal{G}}
\newcommand{\Lap}{\ensuremath{\mathbf{L}}}
\newcommand{\A}{\ensuremath{\mathbf{A}}}
\newcommand{\D}{\ensuremath{\mathbf{D}}}
\newcommand{\removelatexerror}{\let\@latex@error\@gobble}
\newcommand\LL{\bm{\mathit{L}}}
\newcommand\uu{\boldsymbol{\mathit{u}}}
\newcommand\II{\boldsymbol{\mathit{I}}}
\newcommand{\one}{\mathbf{1}}
\begin{document}

\title{Optimal Scale-Free Small-World Graphs with Minimum Scaling of Cover Time}


\author{Wanyue Xu}
\email{xuwy@fudan.edu.cn}
\affiliation{%
  \institution{Fudan University}
  \country{China}
}
\author{Zhongzhi Zhang}
\email{zhangzz@fudan.edu.cn}
\affiliation{%
  \institution{Fudan University}
  \country{China}
}

\thanks{Both authors are with the Shanghai Key Laboratory of Intelligent Information Processing, School of Computer Science, Fudan University, Shanghai 200433, 
China. The work was supported by the Shanghai Municipal Science and Technology Major Project (No.  2018SHZDZX01), the National Natural Science Foundation of China (No. U20B2051), ZJLab, and Shanghai Center for Brain Science and Brain-Inspired Technology. Corresponding author: Zhongzhi Zhang.}

\renewcommand{\shortauthors}{Wanyue Xu and Zhongzhi Zhang}

\begin{abstract}
 The cover time of random walks on a graph has found wide practical applications in different fields of computer science, such as crawling and searching on the World Wide Web and query processing in sensor networks, with the application effects dependent on the behavior of cover time: the smaller the cover time, the better the application performance. It was proved that over all graphs with $N$ nodes, complete graphs have the minimum cover time $N\log N$. However, complete graphs cannot mimic real-world networks with small average degree and scale-free small-world properties, for which the cover time has not been examined carefully, and its behavior is still not well understood. In this paper, we first experimentally evaluate the cover time for various real-world networks with scale-free small-world properties, which scales as $N\log N$. To better understand the behavior of the cover time for real-world networks, we then study the cover time of three scale-free small-world model networks by using the connection between cover time and resistance diameter.  For all the three networks, their cover time also behaves as $N\log N$. This work indicates that sparse networks with scale-free and small-world topology are favorable architectures with optimal scaling of cover time. Our results deepen understanding the behavior of cover time in real-world networks with scale-free small-world structure, and have potential implications in the design of efficient algorithms related to cover time.
\end{abstract}

\begin{CCSXML}
<ccs2012>
   <concept>
       <concept_id>10003033.10003083.10003090</concept_id>
       <concept_desc>Networks~Network structure</concept_desc>
       <concept_significance>500</concept_significance>
       </concept>
   <concept>
       <concept_id>10002951.10003260.10003277</concept_id>
       <concept_desc>Information systems~Web mining</concept_desc>
       <concept_significance>500</concept_significance>
       </concept>
   <concept>
       <concept_id>10002951.10003227.10003351</concept_id>
       <concept_desc>Information systems~Data mining</concept_desc>
       <concept_significance>500</concept_significance>
       </concept>
 </ccs2012>
\end{CCSXML}

\ccsdesc[500]{Networks~Network structure}
\ccsdesc[500]{Information systems~Web mining}
\ccsdesc[500]{Information systems~Data mining}
\keywords{Random walk, cover time, graph mining, electrical network, complex network}

\maketitle

\section{Introduction}

As a paradigmatic dynamic process and a powerful analysis tool, random walks on a graph have attracted substantial attention from the scientific community~\cite{MaPoLa17}. One of the fundamental quantities associated with  random walks is the cover time~\cite{ChBeVo15}. For  random walks on a graph, the cover time is the expected steps a walker needs to visit every node on the graph. {The cover time has found a vast range of theoretical and practical applications in different areas. First, it can be used as a mechanism of crawl and search on the World Wide Web and peer-to-peer networks~\cite{CoFr02,CoFi03,AsLuPu01}, as well as food collection for grazing animals~\cite{ViBuHa99,BeLoMo11}. In addition, the cover time is much relevant to algorithmic design or analysis~\cite{JeSi96,KeDoGe03,GkMiSa04,Li12} in the contexts of replicated database maintenance and information spreading in graphs~\cite{FePeRa90,KaScSh00}. Finally, the cover time of a graph is also closely related to its combinatorial and algebraic properties such as the spectral gap and the conductance~\cite{BrKa89}. }

In view of its significant relevance, the cover time has been extensively studied~\cite{CoRa19,ClAlBu19,Vi20,ChDiLi21,BlJaRy22}. Since there exists no simple formula for computing cover time, a lot of authors focused on bounding this important quantity~\cite{Ma88,Al89,ChRaRu89,BrKa89,Zu90,AlKaLi79,Fe95,KaLiNi89}. Moreover, the cover time in various networks with particular topologies has received much interest. For instance, existing work has considered the cover time for many special graphs, including complete graphs~\cite{Fe95}, regular expander graphs~\cite{BrKa89}, path and cycle graphs~\cite{Lo96}, $k$-ary trees~\cite{Zu92}, torus graphs~\cite{Zu90,DePeRo04,BeKi17,Al91}, lollipop graphs~\cite{MoRa95}, bar-bell graphs~\cite{KaLiNi89},  and extended Sierpi{\'n}ski graphs~\cite{QiZh19}. These studies show that the cover time in different networks displays rich behaviors, which can scale with network size $N$ as $N\log N$, $ N\log^2 N $,  $N^2$, and $N^3$. Therefore, the network structure has a strong impact on the behavior of cover time.

In most application scenarios, it is ideal that the cover time is small. Previous work demonstrates that of all networks with the same number of nodes $N$, the complete graph is the unique graph with the exact minimum cover time $(N-1) \sum_{i=1}^{N-1}\frac{1}{i}$~\cite{Fe08}, the leading scaling of which is $N\log N$ for large $N$. This minimum scaling $N\log N$ can also be achieved in some other graphs, such as regular expander graphs~\cite{BrKa89} and $d$-dimensional torus graphs with $d \ge 3$~\cite{Zu90,DePeRo04,BeKi17,Al91}. However, these graphs with minimum scaling of cover time cannot well mimic real networks~\cite{Ne03}, most of which are simultaneously sparse, scale-free~\cite{BaAl99}, and small-world~\cite{WaSt98}. It has been established that the striking scale-free and small-world structure has a substantial effect on various dynamical processes running on networks, including games~\cite{SaSaPa08}, noisy consensus~\cite{YiZhLiCh15,YiZhSt19,XuWuZhZhKaCh22}, and disease spreading~\cite{ChWaWaLeFa08,VaOmKo09}, among others. It is also known that scale-free and small-world properties profoundly affect the behavior of many key quantities of random walks, i.e., relaxation time~\cite{NoRi04} and Kemeny constant~\cite{XuShZh20}. However, their impacts on the cover time is still not well understood. Particularly, it is largely unknown whether the minimum scaling $N\log N$ of cover time can be reached in scale-free  small-world networks. These motivate us to present an extensive study on the behavior of cover time for real and model networks, in order to explore the effects of scale-free small-world features on cover time.


The main contributions of this work are as follows. 
\begin{itemize}
    \item By using the connection between cover time and resistance distance, we study the cover time of sparse real networks with scale-free small-world properties, and show that the dominating scaling of their cover time behaves with network size $N$ as $N \log N$.
    
    \item We study  numerically the cover time on the Barab{\'a}si-Albert networks~\cite{BaAl99}, which display similar behavior as that of complete graphs.
    
    \item We  study analytically the cover time for two iteratively growing deterministic networks~\cite{DoMa05,DoGoMe02} with constant average degree, power-law degree distribution and small average shortest path distance.  Exploiting the decimation technique, we derive exactly the evolution relations of two-node resistance distance between two consecutive iterations for both networks. Based on the obtained relations, we provide the upper bounds of resistance diameters for the two graphs, both of which are small  constants, indicating that the cover time for both networks scales with network size $N$ as $N \log N$. 
    
    \item  We present a heuristic analysis, which shows that the scale-free small-world topology is responsible for the small cover time on all considered networks. 
\end{itemize}
 

\section{Preliminaries}

In this section, we introduce some basic concepts about a graph, the Laplacian matrix and its pseudoinverse, resistance distances of the corresponding electrical network, random walks and their key quantities, as well as some related work for the problem to be studied.

\subsection{Graph and Matrix Notation}

Let $\mathcal{G}=(\mathcal{V},\,\mathcal{E})$ denote a connected undirected unweighted graph with node set $\mathcal{V}$ and  edge set~$\mathcal{E} \subset \mathcal{V}\times \mathcal{V}$, the numbers of nodes and edges in which are $N = |\mathcal{V}|$ and  $E=|\mathcal{E}|$, respectively. Then, the total degree of all nodes is $2E$, and the average degree is $ d_{\rm{avg}}=(2E) /N$. Let $d_{\min }$  be  the minimum degree among all nodes. A graph is said to be simple if there is no loop or parallel edge. Throughout this paper, all considered graphs  are finite simple connected graphs, and the terms graph and network are used indistinctly. 
For a node $i \in \mathcal{V}$, let $\Delta_i = \{x|(x,i)\in \mathcal{E}\}$ denote the set of its neighbor nodes and let~$d_i = |\Delta_i|$ denote the degree of $i$. A graph is called scale-free~\cite{BaAl99} if its node degree $d$ follows a power-law distribution $P(d) \sim d^{-\gamma}$ with $\gamma > 0$. A graph is called small-world~\cite{WaSt98} if its average shortest path distance $l$ grows at most logarithmically with the number  of nodes $N$, that is, $l \leq \ln N$.

The $N$ nodes in graph $\mathcal{G}$ are labeled by $1,2,3,\ldots, N$, respectively. The adjacency relation between the $N$ nodes is encoded in its adjacency matrix $\A=(a_{ij})_{N \times N}$ of graph $\mathcal{G}$, where $a_{ij}=1$ if nodes $i$ and $j$ are directly connected by an edge in $\mathcal{E}$, and $a_{ij} = 0$~otherwise. Thus, the degree of node $i$ is  $d_i=\sum_{j=1}^{N} a_{ij}$. Let $\D$ denote the diagonal degree matrix of $\mathcal{G}$. The $i$th diagonal entry of $\D$ is $ d_i $, while all other entries are zeros.  Then, the Laplacian matrix $\LL$ of $\mathcal{G}$ is defined to be  $\LL=\D-\A$.

For a connected undirected unweighted graph  $\mathcal{G}=(\mathcal{V},\mathcal{E})$, its Laplacian matrix $\LL$ is symmetric and positive semidefinite. Then, all the eigenvalues of $\LL$ are non-negative, with a unique zero eigenvalue. Let $0=\lambda_1< \lambda_2 \leq \lambda_3\leq \dots\leq \lambda_{N-1} \leq \lambda_N$ denote the $N$ eigenvalues of matrix $\LL$, and let $\uu_k, k={1,2,\dots,N}$, denote their corresponding mutually orthogonal unit eigenvectors. Then, $\LL$ has a spectral decomposition of the form  $\LL=\sum_{k=1}^{N}\lambda_k \uu_k\uu_k^\top$. Thus,  the entry $\LL_{ij}$ at row $i$ and column $j$ of $\LL$ can be expressed as $\LL_{ij}=\sum_{k=2}^{N}\lambda_k u_{ki}u_{kj}$, where $u_{ki}$ is the $i$th component of vector $\uu_{k}$.

The Laplacian matrix $\LL$ is singular and cannot be inverted, since  $0$ is one of its eigenvalues. As a substitute for the inverse, we use the Moore-Penrose generalized inverse of $\LL$, which we  call pseudoinverse of $\LL$~\cite{BeGrTh74}. We use $\LL^{+}$ to denote its pseudoinverse, which can be written as
\begin{equation}\label{LPlus01}
\LL^{+}=\sum_{k=2}^{N}\frac{1}{\lambda_k}\uu_k\uu_k^{\top}.
\end{equation}

Let $\one$ and $\bm{J}$ denote, respectively, the vector and the matrix of appropriate dimensions with all entries being ones. Then,  $\bm{J}=\one\one^\top$. Let $\mathbf{0}$ and $\mathbf{O}$ denote, respectively, the zero vector and zero matrix of appropriate dimensions.  And let $\II$ be the identity matrix of  appropriate dimensions. The  symmetry of $\LL$ and $\LL^{+}$ implies that they share the same null space~\cite{BeGrTh74}. Since ${\LL} \one =\mathbf{0}$, then ${\LL}^{+} \one =\mathbf{0}$.  Furthermore, by using $\bm{J}=\one\one^\top$, we have  ${\LL} \bm{J} = \bm{J} {\LL}={\LL}^{+} \bm{J} = \bm{J} {\LL}^{+}= \mathbf{O}$.   Using the the above spectral decompositions of  $\LL$ and ${\LL}^{+}$, we have
\begin{equation*}
\left(\LL+\frac{1}{N} \bm{J}\right)\left(\LL^{+}+\frac{1}{N} \bm{J}\right)= \II.
\end{equation*}
Then, the pseudoinverse $\LL^{+}$ of $\LL$ can also be represented as
\begin{equation}\label{LPlus02} 
\LL^{+}=\left(\LL+\frac{1}{N} \bm{J}\right)^{-1}-\frac{1}{N} \bm{J}.
\end{equation}
This expression  is explicitly stated  in~\cite{GhBoSa08}  and is  implicitly applied in~\cite{XiGu03,BrFl05}.

\subsection{Resistance Distance of Electrical Network}

An electrical network associated with graph $\G$ is a network of resistances, where every edge in $\G$ is replaced by a unit resistance. In the case without incurring confusion, we also use $\G$ to denote the electrical network corresponding to graph $\G$.  The resistance distance between nodes $i$ and $j$ in $\G$, denoted by $\Omega_{ij}$, is defined as the potential difference between them when a unit current is injected at $i$ (or $j$) and extracted from $j$ (or $i$). It has been proved that the resistance distance is a metric~\cite{KlRa93}. Then, the resistance distance, also called  effective resistance, between any pair of nodes is symmetric, that is, $\Omega_{ji}=\Omega_{ij}$ holds for two arbitrary nodes $i$ and $j$. Let $R$ denote the resistance diameter of the electrical network, which  equals the maximum resistance distance among all pairs of nodes in the electrical network, in other words, 
\begin{align}
\label{R}
R=\max_{i\in \mathcal{V}, j \in \mathcal{V}}\Omega_{ij}\,.
\end{align}
It has been established~\cite{KlRa93} that  $\Omega_{ji}$ can be exactly represented in terms of the elements of  $\Lap^{+}$:
\begin{align}
\label{Resij}
\Omega_{ji}=\Omega_{ij}=\Lap^{+}_{ii}+\Lap^{+}_{jj}-2\Lap^{+}_{ij}\,.
\end{align}

Using the eigenvalue $\lambda_k$ and corresponding eigenvector $\uu_k$ of the Laplacian matrix for $ k={2,3, \ldots,N}$, the resistance distance $\Omega_{ij}$ between two nodes $i$ and $j$ can also be represented as~\cite{KlRa93}
\begin{equation}\label{rdis}
\Omega_{ij}= \sum_{k=2}^{N}\frac{1}{\lambda_k}(u_{ki}-u_{kj})^2.
\end{equation}
{The resistance distance has found applications in various fields, such as community detection~\cite{BeRaMi17, BeNaRaDh20}, link prediction~\cite{KuBeSa22}, as well as network centrality~\cite{LiZh18,LiPeShYiZh19,KuKuMi19,BeNaRaRa20}.} Moreover, the effective resistance of an electrical network has many interesting properties. For example,  effective resistances satisfy the following sum rule~\cite{Ch10}.
\begin{lemma}\label{sumRule}
	For any two different nodes $i$ and $j$ in  an  electrical network $\G=(\mathcal{V},\mathcal{E})$,  
	\begin{equation}
	d_i\Omega_{ij}+\sum_{k\in \Delta_{i}}(\Omega_{ik}-\Omega_{jk})=2\,.
	\end{equation}
\end{lemma}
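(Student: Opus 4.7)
The plan is to derive the identity purely algebraically from the formula $\Omega_{ij}=\Lap^{+}_{ii}+\Lap^{+}_{jj}-2\Lap^{+}_{ij}$ already recorded in equation~(\ref{Resij}), together with the matrix identity $\Lap\Lap^{+}=\II-\frac{1}{N}\bm{J}$, which drops out of the spectral decomposition (\ref{LPlus01}) once one notes that $\uu_1=\frac{1}{\sqrt{N}}\one$ and that the remaining $\uu_k$ are orthogonal to $\one$.

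First I would substitute (\ref{Resij}) into every term of the left-hand side and collect. The terms involving $\Lap^{+}_{jj}$ and $\Lap^{+}_{kk}$ cancel pairwise between $d_i\Omega_{ij}$, $\Omega_{ik}$, and $\Omega_{jk}$, leaving
\[
2d_i\Lap^{+}_{ii}-2d_i\Lap^{+}_{ij}-2\sum_{k\in\Delta_i}\Lap^{+}_{ik}+2\sum_{k\in\Delta_i}\Lap^{+}_{jk}.
\]
The next step is to recognize the two grouped expressions as rows of $\Lap\Lap^{+}$ read off at column $i$ and column $j$. Since $\Lap_{ii}=d_i$, $\Lap_{ik}=-1$ for $k\in\Delta_i$, and $\Lap_{ik}=0$ otherwise, one checks that
\[
d_i\Lap^{+}_{ii}-\sum_{k\in\Delta_i}\Lap^{+}_{ik}=(\Lap\Lap^{+})_{ii},\qquad d_i\Lap^{+}_{ij}-\sum_{k\in\Delta_i}\Lap^{+}_{jk}=(\Lap\Lap^{+})_{ij}.
\]

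Finally I would invoke $\Lap\Lap^{+}=\II-\frac{1}{N}\bm{J}$, which yields $(\Lap\Lap^{+})_{ii}=1-\frac{1}{N}$ and $(\Lap\Lap^{+})_{ij}=-\frac{1}{N}$. Plugging these into the simplified expression gives $2(1-\frac{1}{N})-2(-\frac{1}{N})=2$, matching the claim.

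The only genuine subtlety is the matrix identity $\Lap\Lap^{+}=\II-\frac{1}{N}\bm{J}$; this is where the pseudoinverse differs from a true inverse and it is essential that the $-\frac{1}{N}$ contributions to $(\Lap\Lap^{+})_{ii}$ and $(\Lap\Lap^{+})_{ij}$ cancel against each other, because the answer $2$ is independent of $N$. Apart from that, the proof is mechanical bookkeeping, so I expect no obstacle beyond keeping the index patterns in $\sum_{k\in\Delta_i}$ straight.
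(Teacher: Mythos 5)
Your proof is correct: the substitution of (\ref{Resij}) into the left-hand side does collapse to $2(\Lap\Lap^{+})_{ii}-2(\Lap\Lap^{+})_{ij}$ exactly as you describe (the $\Lap^{+}_{jj}$ terms cancel between $d_i\Omega_{ij}$ and $\sum_k\Omega_{jk}$, and the $\Lap^{+}_{kk}$ terms cancel between the two sums over $\Delta_i$), and the identity $\Lap\Lap^{+}=\II-\frac{1}{N}\bm{J}$ follows immediately from the spectral decomposition since $\sum_{k=1}^{N}\uu_k\uu_k^{\top}=\II$ and $\uu_1=\frac{1}{\sqrt{N}}\one$. Note that the paper itself gives no proof of this lemma --- it is quoted from the cited reference as a known sum rule for effective resistances --- so your argument serves as a self-contained verification rather than an alternative to anything in the text; it is also essentially the standard derivation of this identity via the Laplacian pseudoinverse, and your observation that the two $-\frac{1}{N}$ contributions must cancel (so that the right-hand side is independent of $N$) is exactly the point where using $\Lap^{+}$ in place of a true inverse has to be handled with care. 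One tiny bookkeeping remark: identifying $\sum_{k\in\Delta_i}\Lap^{+}_{ki}$ with $\sum_{k\in\Delta_i}\Lap^{+}_{ik}$ uses the symmetry of $\Lap^{+}$, which holds here because $\Lap$ is symmetric; it is worth saying explicitly, but it is not a gap.
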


\subsection{Random Walks on a Graph}

For a connected undirected network $\mathcal{G}=(\mathcal{V},\,\mathcal{E})$, consider unbiased discrete-time random walks on it. At each time step, the walker moves to a node uniformly chosen from the neighbors of  current location. Such a stochastic process is described by a Markov chain~\cite{KeSn76}, characterized by the transition matrix $\bf{T}=\D^{-1}\A$, with the $ij$th entry $t_{ij}= a_{ij}/d_i$ representing the probability of jumping to $j$ from $i$ in one time step. Let $\sigma_1$, $\sigma_2$, $\sigma_3$, $\ldots$, $\sigma_N$ be the $N$ eigenvalues of transition matrix $\bf{T}$, which can be ranked in decreasing order as $1=\sigma_1>\sigma_2 \geq \sigma_3 \geq \cdots \geq \sigma_N \geq -1$. The difference $1-\sigma_{2}$ between the largest eigenvalue and the second largest eigenvalue is called the spectral gap. 

For random walks on a graph, there are several fundamental quantities, including hitting time, commute time, cover time, and so on. For two nodes $u, v \in \mathcal{V}$, the hitting time $H_{uv}$ from $u$ to $v$ is defined as the expected number of steps a walker starting at $u$ requires to reach $v$ for the first time. Let $H_{\min}(\G)$ and $H_{\max}(\G)$ represent, respectively, the minimum and maximum hitting time among all pairs of nodes in graph $\G$.  For two nodes $u$ and $v$, their commute time $C_{uv}$ is the expected number of time steps a walker takes to go from $u$ to $v$ and back, that is, $C_{uv}= H_{uv}+ H_{vu}$. There is an elegant relation between commute time and effective resistance~\cite{ChRaRu89}: 
\begin{equation}
\label{com_res}
C_{uv}=2E\Omega_{uv}.
\end{equation}
Cover time is also an interesting quantity of random walks. For a node $u$, its cover time $C_u$ is defined as the expected number of steps necessary for a walker starting at $u$ to visit every node in $\mathcal{V}$. The cover time $C(\G)$ of the whole graph $\G$ is the maximum value of $C_u$ among all nodes in $\mathcal{V}$. 

\subsection{Related Work}

Due to the broad range of applications, the cover time has received considerable attention. Most previous work focus on two aspects: one is bounding the cover time, the other is uncovering the effects of network structure on the behavior of cover time.

Many techniques have been developed or used for  bounding the cover time~\cite{Ma88,Al89,ChRaRu89,BrKa89,Zu90,AlKaLi79,Fe95,KaLiNi89}. It was shown in the seminal work ~\cite{AlKaLi79} 
that for any connected graph $\G$, $C(\G)<2 N E$, which was later refined in~\cite{KaLiNi89} to obtain $C(\G)\le 4 N^{2} d_{\rm{avg}} / d_{\min }$. Thus, for regular graphs, $C(\G)\le 4 N^{2}$. In addition, the cover time of a graph can also be upper bounded in terms of the spectral gap of the transition matrix as $O\left(\frac{N \log N}{1-\sigma_{2}}\right)$\cite{MiPaSa03,MiPaSa06}.  According to the Matthews theorem~\cite{Ma88}, the cover time $C(\G)$ of graph $\G$ can be bounded by hitting times as:
\begin{align}
h_N H_{\min}(\G)\leq C(\G) \leq h_N H_{\max}(\G),
\end{align}
where $h_N$ is the $N$th harmonic number, given by  $h_N=\sum_{i=1}^{N} \frac{1}{i}  \approx \log N$. 
Using the connection between commute time and effective resistance in~\eqref{com_res}, it follows that~\cite{ChRaRu89},
\begin{equation}
\label{commuteR}
N\log N\le C(\G) \le 2 E\,R \log N,
\end{equation}
which means that for a sparse graph with constant average degree $d_{\rm{avg}}=(2E) /N$ and  constant resistance diameter $R$, its cover time achieves the minimum scaling $N \log N$.


In addition to bounding the cover time, a concerted effort has also been devoted to unveiling the influences of network structure on the behavior of cover time. Particularly, many groups have studied the cover time for  networks with different structural features. It was shown that in $N$-node networks with different  structures, their cover time often behaves differently with $N$. For the complete graph, its cover time  is $\Theta(N \log N)$~\cite{Fe95}, which is the possible minimum scaling among all graphs. For regular expander graphs~\cite{BrKa89} and those graphs with $d_{\min}\ge \lfloor{\frac{N}{2}}\rfloor$~\cite{ChRaRu89}, they also have expected cover time $\Theta(N \log N)$. For the path and cycle graphs~\cite{Lo96}, their cover time scales as the square of $N$. For the complete $k$-ary tree, the cover time is $\Theta(N\log^2 N)$~\cite{Zu92}. For $d$-dimensional torus graphs, the cover time is $\Theta(N \log^2 N)$ and $\Theta(N \log N)$ for $d=2$ and $d\ge3$, respectively~\cite{Zu90,DePeRo04,BeKi17,Al91}. Finally, for the lollipop graph~\cite{MoRa95} and the bar-bell graph~\cite{KaLiNi89}, their cover time is both $\Theta(N^3)$,  the possible maximum scaling among all graphs. 

As shown above, among all $N$-node graphs, the complete graph has the absolutely minimum cover time with scaling $N \log N$. A graph is called optimal if this minimum scaling $N \log N$ for cover time is attained.  In this sense, the complete graphs and regular expander graphs are optimal ones. However, the aforementioned graphs with minimum scaling of cover time cannot mimic realistic networks, which are often sparse, exhibiting the remarkable scale-free~\cite{BaAl99} small-world~\cite{WaSt98} properties  simultaneously. The scale-free property implies that the node degree obeys a power-law distribution $P(d)\sim d^{-\gamma}$. 
While the small-world property denotes that the average shortest path distance over all node pairs grows at most logarithmically with the network size $N$. In many practical applications, it is desirable that the cover time is as small as possible. Therefore, it is theoretically and practically interesting to find or design optimal sparse graphs having  scale-free small-world architecture, for which the minimal scaling $N\log N$ for the cover time can be reached.  

In the sequel, we will consider the cover time for scale-free small-world sparse graphs. We first study empirically the  cover time for some real-world scale-free small-world networks and show that their  cover time scales with network size $N$ as $N\log N$. We then evaluate the cover time  for  stochastic  Barab{\'a}si-Albert graphs~\cite{BaAl99} and two deterministically growing  scale-free small-world sparse graphs~\cite{DoMa05,DoGoMe02}, for all of which their cover time behaves with $N$ as $N\log N$. In this context, scale-free small-world networks have optimal structure with minimal scaling of cover time. 

\section{Cover Time in Realistic  Scale-Free Networks}

For random walks in a general graph, it is challenging, even impossible, to exactly evaluate the cover time, even if the structure of the graph is known beforehand. In this paper, we focus on the behavior of cover time. For this purpose, we use~\eqref{commuteR} to provide lower and upper bounds of cover time in different scale-free small-world networks, instead of computing the exact values. Our main goal is to unravel the influence of the scale-free small-world topology on the dominating scaling of cover time.

\subsection{Datasets of Real Networks}

All the datasets used in our experiment are chosen from the Koblenz Network Collection~\cite{Ku13} and Network Repository~\cite{RyNe15}. The collected networks are scale-free small-world, and are highly representative, spanning different fields such as social science, life science, and information science. Since we are only concerned with connected, undirected, unweighted simple graphs, we perform some preprocessings for the studied networks. For those directed or weighted networks, we convert directed or weighted edges to undirected and unweighted ones. And for each network, we only keep the largest connected component (LCC), deleting other small components and eliminating self-loops in the LCC. That is to say, we only study the cover time of the LCCs of resultant networks. In Table~\ref{RealNetwork}, we report related information of the  considered real-world networks, which  are listed in ascending order of the node number.

After preprocessing, we calculate the upper bound of the cover time for each network by using~\eqref{commuteR}. To this end, we numerically determine the resistance diameter corresponding to the LCC of each graph by using~\eqref{Resij} and~\eqref{R}. {Moreover, we simulate the actual cover time on the LCCs of these real-world networks.  The results are listed in the last three columns of Table~\ref{RealNetwork}.} 
From Table~\ref{RealNetwork}, we observe that for all studied the real scale-free networks,
their resistance diameter is significantly small in general. Actually, as found for various other properties (e.g., clustering coefficient~\cite{WaSt98}), the resistance diameter $R$ of real networks is not very sensitive  to the number of nodes $N$, but tends to very small constants. Thus, by \eqref{commuteR} the scaling of upper bound for the cover time is $ N\log  N$. {This is also confirmed in the last column  of Table~\ref{RealNetwork}, which shows that for each of these networks the ratio of its cover time and $ N\log  N$ is small.}  On the other hand, for an arbitrary graph, the lower bound of cover time is at most $ N\log  N$. We then conclude that for all the studied real networks with node number $N$, their cover time behaves with $N$ as $ N\log  N$, which is similar to that found for complete graphs.

\begin{table}
	\centering
	\tabcolsep=2.5pt
	\fontsize{8}{9}\selectfont
	\caption{ Statistics of some datasets and their resistance diameters. For each network $\G$, we indicate the number of nodes $N$, and the number of edges $E$, the average degree $d_{\rm{avg}}$, the  power-law exponent $\gamma$, the maximum for all pairs of shortest path lengths $\Delta$, and the resistance diameter $R$, {the cover time $C(\G)$, and the ratio $\frac{C(\G)}{N\log N}$} for its largest connected component.} 
	\label{RealNetwork}
	\resizebox{0.5\linewidth}{!}{
		\begin{tabular}{lrrrrrrrrr}
			\toprule
			Network & & $N$ & $E$ & $d_{\rm{avg}}$ & $\gamma$ & $\Delta$ & $R$ & {$C(\G)$} & {$\frac{C(\G)}{N\log N}$}\\
			\midrule
			karate & & 34 & 78 & 4.59 & 2.16 & 5 & 1.83 &	328	&	2.735\\
			windsurfers & & 43 & 336 & 15.63 & 4.00 & 3 & 0.34 &	229	&	1.415\\
			lesmis & & 77 & 254 & 6.60 & 1.52 & 5 & 3.62 &	1023	&	3.058\\
			adjnoun & & 112 & 425 & 7.59 & 3.62 & 5 & 2.59 &	2747	&	5.198\\
			bio-celegansneural & & 297 & 2148 & 14.46 & 3.34 & 5 & 2.08 &	32134	&	19.002\\
			bio-celegans & & 453 & 2025 & 8.94 & 2.63 & 7 & 4.59 &	12069	&	4.356\\
			ia-crime-moreno & & 829 & 1473 & 3.55 & 3.31 & 10 & 4.33 &	28242	&	5.069\\
			soc-wiki-Vote & & 889 & 2914 & 6.56 & 3.40 & 13 & 7.63 &	48986	&	8.115\\
			socfb-Reed98 & & 962 & 18812 & 39.11 & 2.80 & 6 & 3.02 &	147888	&	22.380\\
			soc-hamsterster & & 2000 & 16097 & 16.10 & 2.42 & 10 & 5.61 &	247492	&	16.280\\
			socfb-USFCA72 & & 2672 & 65244 & 48.84 & 2.50 & 7 & 3.63 &	704113	&	33.396\\
			socfb-nips-ego & & 2888 & 2981 & 2.06 & 4.52 & 9 & 6.35 &	66389	&	2.884\\
			bio-grid-worm & & 3343 & 6437 & 3.85 & 2.39 & 13 & 9.62 &	163195	&	6.015\\
			facebooknips & & 4039 & 88234 & 43.69 & 2.25 & 8 & 2.88 &	1121921	&	33.451\\
			ca-Erdos992 & & 4991 & 7428 & 2.98 & 2.18 & 14 & 7.61 &	185490	&	4.364\\
			routeviews & & 6474 & 12572 & 3.88 & 2.07 & 9 & 5.91 &	243710	&	4.289\\
			ia-reality & & 6809 & 7680 & 2.26 & 3.38 & 8 & 4.08 &	147961	&	2.462\\
			fb-pages-government & & 7057 & 89429 & 25.34 & 2.85 & 10 & 6.21 &	1837993	&	29.390\\
			soc-wiki-elec & & 7066 & 100727 & 28.51 & 1.42 & 7 & 4.30 &	2327121	&	37.158\\
			bio-dmela & & 7393 & 25569 & 6.92 & 3.39 & 11 & 8.10 &	861340	&	13.078\\
			soc-Blogcatalog-ASU & & 10312 & 333983 & 64.78 & 2.01 & 5 & 2.38 &	43014309	&	42.521\\
			ca-HepPh & & 11204 & 117619 & 21.00 & 2.09 & 13 & 6.75 &	1860637	&	17.810\\
			soc-Anybeat & & 12645 & 49132 & 7.77 & 1.75 & 10 & 7.21 &	1600433	&	13.400\\
			ca-AstroPh & & 17903 & 196972 & 22.00 & 2.87 & 14 & 6.00 &	3825904	&	21.822\\
			cond-mat & & 21363 & 91286 & 8.55 & 3.35 & 15 & 8.75 &	2262736	&	10.624\\
			tech-internet-as & & 40164 & 85123 & 4.24 & 2.09 & 11 & 6.55 &	2872041	&	6.745\\
			ego-gplus & & 23613 & 39182 & 3.32 & 2.62 & 8 & 5.51 &	796500	&	3.349\\
			\bottomrule
		\end{tabular}
	}
\end{table}

In order to deepen the understanding of the  behavior of cover time for realistic networks, in the following sections, we will determine analytically or numerically the scaling of cover time in three scale-free model networks, including Barab{\'a}si-Albert networks, Apollonian networks, and pseudofractal scale-free webs, and show that for the three networks, their cover time is $ \Theta(N\log  N)$. Thus, the $ N\log  N$ growth of cover time with network size $N$ in scale-free networks is universal.
\section{Cover time in Barab\'{a}si-Albert Networks}
In this section, we study the cover time for the popular Barab\'{a}si-Albert (BA) networks~\cite{BaAl99}, which capture the generating mechanisms for many real-world scale-free networks.
 
As an important scale-free network model, the BA networks~\cite{BaAl99} are  generated as follows. Initially, the network is a small connected graph, containing $m_0 \geq m$ nodes, with $m \geq 1$. At every time step, a new node is created and linked to $m$ different old nodes, with the probability of an old node $i$ connecting the new node being proportional to the degree $d_i$ of node $i$. Repeating these two procedures of growth and preferential attachment $g$ times, we obtain the BA network with $N=m_0+g$ nodes. When $N$ is sufficiently large, the average node degree $d_{\rm{avg}}$ of BA networks is approximately equal to $2m$. The node degree of the BA networks obeys a power-law distribution $P(d) \sim d^{-3}$. In addition to the scale-free property, the BA networks are small-world, the average shortest path distance of which increases logarithmically with the node number $N$~\cite{ChLu02}. 

According to the above procedures, we generate BA networks with various numbers of nodes and average degrees. We then study the cover time for random walks on these networks. For this purpose, we numerically compute their resistance diameter. Figure \ref{FigConBA} reports the resistance diameter of these BA networks with different $m=2$, $3$, $4$. From this figure, we can see that for large $N$,  the resistance diameter of these networks is independent of the network size $N$, but approaches to a small $m$-dependent constant: the larger the $m$, the smaller the resistance diameter. Therefore, according to~\eqref{commuteR}, the leading scaling $N \log N$ of cover time seems to be universal for BA networks. 

\begin{figure}
	\begin{center}
		\includegraphics[width=0.6\linewidth]{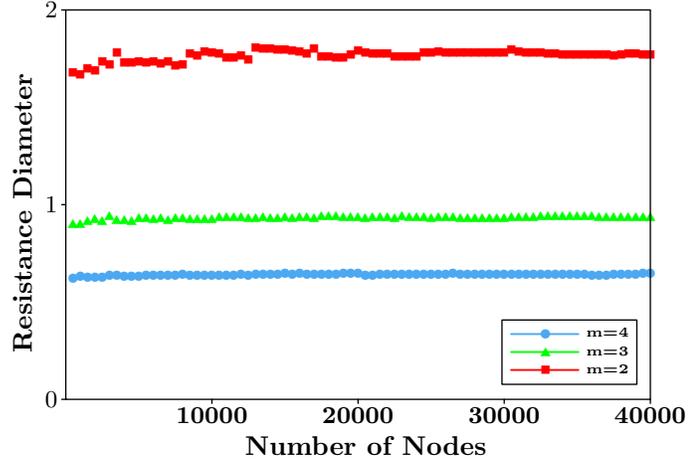}
	\end{center}
	\caption{Resistance diameters of the Barab{\'a}si-Albert networks with various $m$.}
	\label{FigConBA}
\end{figure}

\section{Cover time in Apollonian networks} \label{Apol}
This section is devoted to studying analytically the cover time of the Apollonian networks~\cite{DoMa05} with the common scale-free small-world characteristics as observed in real networks~\cite{Ne03}.

\subsection{Construction and Properties \label{modelA}}

The Apollonian networks, translated from the popular Apollonian packings, were introduced independently in~\cite{AnHeAnDa05} and in~\cite{DoMa05}. Although different initial constructions of Apollonian packing correspond to different networks, their structural and dynamical properties are similar. Here we consider the network version in~\cite{DoMa05}, which is defined in an iterative way~\cite{ZhRoZh06}. Let $\mathcal{A}_g=(\mathcal{V}_g, \mathcal{E}_g)$, $g \ge 0$, represent the Apollonian network after $g$ iterations. Initially, $\mathcal{A}_0$ is a tetrahedron composed of four triangles as shown in Figure~\ref{FigIni}(a). For $g\geq 1$, performing the following operations in Figure~\ref{FigIni}(b) on $\mathcal{A}_{g-1}$, one obtains $\mathcal{A}_g$: 
For each triangle in $\mathcal{A}_{g-1}$ that was generated at the $(g-1)$th iteration, create a new node and connect it to all the three nodes of this triangle.
Figure~\ref{netA} illustrates the evolution of the Apollonian network from $\mathcal{A}_0$ to $\mathcal{A}_2$.

Let $N_g= |\mathcal{V}_{g}|$ and $E_g = |\mathcal{E}_{g}|$ denote, respectively, the numbers of nodes and edges in $\mathcal{A}_{g}$. Let $\mathcal{W}_{g+1} = \mathcal{V}_{g+1} \backslash \mathcal{V}_{g}$ denote the set of new nodes generated at the $(g+1)$th iteration, and let~$W_g$ denote the cardinality of set $\mathcal{W}_g$. It is easy to derive that for all $g \ge 0$, $W_{g+1} = 4\times3^g$, $N_g=2\times3^{g}+2$, and  $E_g=6\times 3^g$. Therefore, the average node degree in network $\mathcal{A}_g$ is $2E_{g}/N_g$, which approaches to $6$ for large $g$, implying that the whole family of Apollonian networks is sparse. For a node $i$ in $\mathcal{A}_g$, let $d_i^{(g)}$ be its degree. Assume that node $i$ was generated at iteration $g_i$ ($g_i \geq 0$), then $d_i(g+1)=2\,d_i(g)=3\times 2^{g-g_i}$. 

The Apollonian networks exhibit the typical characteristics of many real natural and artificial  networks~\cite{DoMa05}. They are scale-free with the degree of their nodes obeying a power-law distribution $P(d)\sim d^{-(1 + \ln3/ \ln2)}$. Furthermore, they are small-world, since their diameter grows as a logarithmic function of the node number $N$~\cite{ZhRoZh06}. Finally, they are highly clustered with their average clustering coefficients tending to $0.8284$. Hence, Apollonian networks are an ideal model for mimicking real-life systems.
\begin{figure}
	\begin{center}
		\includegraphics[width=0.6\linewidth]{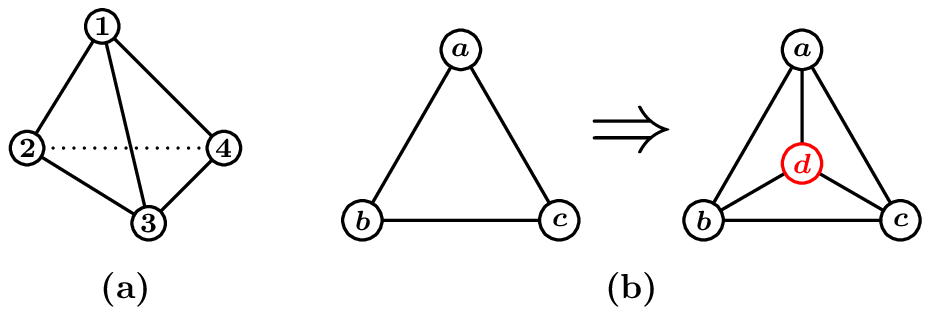}
	\end{center}
	\caption{(a) The initial Apollonian network. (b) Iterative construction approach of the Apollonian network. }
	\label{FigIni}
\end{figure}

\begin{figure}
	\begin{center}
		\includegraphics[width=0.5\linewidth]{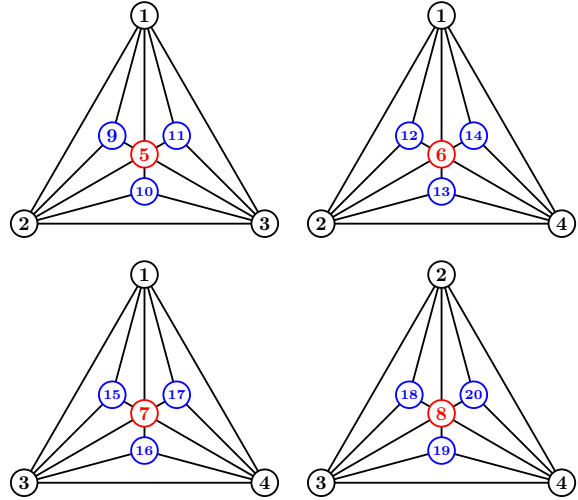}
		\caption{The Apollonian network $\mathcal{A}_{2}$. The evolution of the four faces in Figure~\ref{FigIni}(a) is  demonstrated seperately.}
		\label{netA}
	\end{center}
\end{figure}
\subsection{Recursive Relations for Matrices}
Denote $\A_g$ and $\D_g$ as the adjacency matrix and the diagonal degree matrix of network  $\mathcal{A}_g$, respectively. The element $\A_g(i,j)$ of matrix $\A_g$ at the $i$th row and the $j$th column is defined as follows: $\A_g(i,j)=1$ if nodes $i$ and $j$ are linked by an edge, or $\A_g(i,j)=0$ otherwise. The $i$th diagonal entry of degree matrix $\D_g$ is the degree $d_i^{(g)}$ of node $i$ in network $\mathcal{A}_g$. Then, the Laplacian matrix of network $\mathcal{A}_g$, denoted by $\LL_g$, is $\LL_g=\D_g - \A_g$. Below, we provide the recursion relations governing the evolution for the three matrices $\A_g$, $\D_g$, and $\LL_g$.

For the Apollonian network $\mathcal{A}_{g+1}$ after $g+1$ iterations, let $\alpha$ represent the set of old nodes already existing in the $g$th generation network  $\mathcal{A}_{g}$, and let $\beta$ be the set of new nodes in $\mathcal{W}_{g+1}$. Then, the adjacency matrix $\A_{g+1}$ of network $\mathcal{A}_{g+1}$ can be written in the following block form
\begin{align}
\A_{g+1}
= \left(
\begin{array}{cc}
\A_{g+1}^{\alpha,\alpha} & \A_{g+1}^{\alpha,\beta}\\
\A_{g+1}^{\beta,\alpha} & \A_{g+1}^{\beta,\beta}\nonumber\\
\end{array}
\right)=
\left(
\begin{array}{cc}
\A_g & \left(\A_{g+1}^{\beta,\alpha}\right)^\top\\
\A_{g+1}^{\beta,\alpha} & \bf{O} \nonumber
\end{array}
\right), 
\end{align}
where the submatrix $\A_{g+1}^{\alpha,\alpha}$ denotes the  adjacency relation between all pairs of old nodes $i$ and $j$, with $i,j\in \alpha$; submatrix $\A_{g+1}^{\alpha,\beta}$  represents the adjacency relation between all pairs of nodes $i$ and $j$, with $i \in \alpha$ and $j \in \beta$; similarly,  $\A_{g+1}^{\beta,\alpha}$ represents the  adjacency relation between those pairs of nodes $i$ and $j$, with  $i \in \beta$ and $j \in \alpha$; finally, $\A_{g+1}^{\beta,\beta}$ represents the adjacency relation between all pairs of new nodes $i$ and $j$, with both $i,j\in \beta$.  The second equality is accounted for as follows.  Since the adjacency relationship between old nodes keep unchangeable, $\A_{g+1}^{\alpha,\alpha} = \A_g$. On the other hand, since any pair of new nodes is not linked to each other, $\A_{g+1}^{\beta,\beta}$ is the  $W_{g+1} \times W_{g+1}$ zero matrix $\bf{O}$. Finally, by construction,  $\A_{g+1}^{\alpha,\beta}=\left(\A_{g+1}^{\beta,\alpha}\right)^\top$. 

The diagonal matrices $\D_{g+1}$ and $\D_g$ obey the relation 
\begin{align}
\D_{g+1} =
\left(
\begin{array}{cc}
\D_{g+1}^{\alpha,\alpha} & \bf{O}\\
\bf{O} & \D_{g+1}^{\beta,\beta}\nonumber\\
\end{array}
\right) =
\left(
\begin{array}{cc}
2\D_{g} & \bf{O}\\
\bf{O} & 3\II
\end{array}
\right),
\end{align}
which is obtained by using the following fact that in the evolution process of the Apollonian network from iteration $g$ to $g+1$, the degree of each old node in $\alpha$ doubles, and the degree of all newly added nodes in $\beta$ is 3.
Thus, the Laplacian matrix $\LL_g$ evolves in the following way 
\begin{align}
\LL_{g+1}
=\D_{g+1} - \A_{g+1} =
\left(
\begin{array}{cc}
2\D_g - \A_g & -\A_{g+1}^{\alpha,\beta}\\
-\A_{g+1}^{\beta,\alpha} & 3\II 
\end{array}
\right).
\end{align}
In this way, we have obtained recursive relations for relevant matrices, which are helpful for the following use.
\subsection{Relations between Effective Resistances}

As shown in~\eqref{commuteR}, the cover time of a connected graph $\G$ is closely related to its resistance diameter. Next, we study the scaling of the cover time for Apollonian networks by evaluating their resistance diameters. To achieve this goal, we first establish the evolution relationship governing  effective resistance between any two old nodes.

Instead of using the entries of pseudoinverse for the Laplacian matrix $\LL_g$, in what follows, we apply the entries of   $\{1\}-$inverse for $\LL_g$ to represent the effective resistance between any node pair. For a matrix $X$, matrix $M$ is called a $\{1\}-$inverse of  $X$, if and only if $XMX = X$~\cite{Ti94}. Note that for any matrix $X$, its pseudoinverse is also a $\{1\}-$inverse of $X$. Let $X^{\dag}$ represent one of the $\{1\}-$inverses of $X$. For any connected graph $\mathcal{G}$, the effective resistance between any two nodes can be represented in terms of the entries of any $\{1\}-$inverse of its Laplacian  matrix~\cite{Ba99}.
\begin{lemma}\label{efpro1}
	For a connected graph $\mathcal{G}=(\mathcal{V},\mathcal{E})$,
	let $\LL^{\dag}_{ij}$ denote the element at row $i$ and column $j$ of a $\{1\}-$inverse $\LL^{\dag}$ of its Laplacian matrix $\LL$. 
	Then, for any pair of nodes $i,j \in \mathcal{V}$,
	its effective resistance $\Omega_{ij}$ can be expressed in terms of the elements of $\LL^{\dag}$ as
	\begin{align}
	\Omega_{ij} =& \LL^{\dag}_{ii} + \LL^{\dag}_{jj} - \LL^{\dag}_{ij} - \LL^{\dag}_{ji}.
	\end{align}
\end{lemma}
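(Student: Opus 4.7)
The plan is to reduce the claim to the injection--extraction characterization of effective resistance. Recall that $\Omega_{ij}$ is the potential difference between $i$ and $j$ when one unit of current is injected at $i$ and extracted at $j$; concretely, $\Omega_{ij}=\vv_i-\vv_j$ for any solution $\vv$ of the linear system
\begin{equation*}
\LL\,\vv \;=\; \ee_i-\ee_j,
\end{equation*}
where $\ee_k$ denotes the $k$th standard basis vector. This linear system is consistent because $\mathcal{G}$ is connected: the null space of $\LL$ is spanned by $\one$, so the column space of $\LL$ equals $\{\bb:\one^{\top}\bb=0\}$, and $\one^{\top}(\ee_i-\ee_j)=0$.

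My first step is to show that $\vv\defeq \LL^{\dag}(\ee_i-\ee_j)$ is a valid solution of this system. By consistency there exists $\xx$ with $\LL\xx=\ee_i-\ee_j$, and then the defining property $\LL\LL^{\dag}\LL=\LL$ gives
\begin{equation*}
\LL\,\vv \;=\; \LL\LL^{\dag}\LL\,\xx \;=\; \LL\,\xx \;=\; \ee_i-\ee_j,
\end{equation*}
as required. Because any two solutions of $\LL\vv=\ee_i-\ee_j$ differ by a multiple of $\one$, the scalar $\vv_i-\vv_j$ is independent of the chosen solution, so $\Omega_{ij}=\vv_i-\vv_j$ for this particular $\vv$.

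The second step is a direct algebraic expansion. Writing $\vv_i-\vv_j=(\ee_i-\ee_j)^{\top}\vv=(\ee_i-\ee_j)^{\top}\LL^{\dag}(\ee_i-\ee_j)$ and distributing,
\begin{equation*}
\Omega_{ij} \;=\; \ee_i^{\top}\LL^{\dag}\ee_i - \ee_i^{\top}\LL^{\dag}\ee_j - \ee_j^{\top}\LL^{\dag}\ee_i + \ee_j^{\top}\LL^{\dag}\ee_j \;=\; \LL^{\dag}_{ii}+\LL^{\dag}_{jj}-\LL^{\dag}_{ij}-\LL^{\dag}_{ji},
\end{equation*}
which is exactly the claimed identity. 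Note that I deliberately keep $\LL^{\dag}_{ij}$ and $\LL^{\dag}_{ji}$ separate: a general $\{1\}$-inverse need not be symmetric, even though the pseudoinverse $\LL^{+}$ in equation~(\ref{Resij}) is, in which case the formula collapses to $\LL^{+}_{ii}+\LL^{+}_{jj}-2\LL^{+}_{ij}$.

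The only genuinely delicate point in the argument is the first step, namely showing that $\LL^{\dag}(\ee_i-\ee_j)$ actually solves the Laplacian system even though $\{1\}$-inverses are far from unique and need not satisfy $\LL^{\dag}\LL=\II$ or $\LL\LL^{\dag}=\II$ on any particular vector. The trick that makes it work is that we only need $\LL\LL^{\dag}$ to act as the identity on vectors already in the column space of $\LL$, and connectivity of $\mathcal{G}$ is precisely what places $\ee_i-\ee_j$ in that column space. Once this observation is in hand, everything else is bookkeeping.
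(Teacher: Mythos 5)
Your proof is correct. The paper does not actually prove Lemma~\ref{efpro1} --- it imports the statement from the cited reference --- so there is no internal argument to compare against; your derivation is the standard and complete one. You correctly isolate the only nontrivial point: a $\{1\}$-inverse only guarantees $\LL\LL^{\dag}\bb=\bb$ for $\bb$ in the column space of $\LL$, and connectivity places $\ee_i-\ee_j$ in that column space (the orthogonal complement of $\mathrm{span}(\one)$), after which the quadratic-form expansion and the observation that solutions differ only by multiples of $\one$ finish the job. Your remark on why $\LL^{\dag}_{ij}$ and $\LL^{\dag}_{ji}$ must be kept separate for a non-symmetric $\{1\}$-inverse is exactly the right caveat.
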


The following lemma gives a $\{1\}-$inverse for a block square matrix~\cite{SuWaZh15}.
\begin{lemma}\label{LemmaBlock1inv}
	For a block matrix
	$X = \left(
	\begin{array}{cc}
	A & B\\
	B^{\top} & C \\
	\end{array}
	\right)
	$,
	where $C$ is an invertible matrix, if there exists a $\{1\}$-inverse $S^{\dag}$ for $S = A-BC^{-1}B^{\top}$,
	then
	\begin{align}
	X^{\dag} = \left(
	\begin{array}{cc}
	S^{\dag} & -S^{\dag}BC^{-1}\\
	-C^{-1}B^{T}S^{\dag} & C^{-1}B^TS^{\dag}BC^{-1} + C^{-1}
	\end{array}
	\right)
	\end{align}
	is a $\{1\}$-inverse of $X$.
\end{lemma}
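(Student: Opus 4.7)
The plan is to verify directly that $X \, X^{\dag} \, X = X$, since this is the defining property of a $\{1\}$-inverse. Everything reduces to block matrix multiplication, and the only nontrivial input will be the hypothesis $S \, S^{\dag} \, S = S$ on the Schur complement $S = A - B C^{-1} B^{\top}$.

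First, I would compute $X \, X^{\dag}$ block by block. Collecting terms, the top-left block becomes $A S^{\dag} - B C^{-1} B^{\top} S^{\dag} = (A - B C^{-1} B^{\top}) S^{\dag} = S S^{\dag}$; the bottom-left block vanishes, since $B^{\top} S^{\dag} - C \cdot C^{-1} B^{\top} S^{\dag} = \mathbf{O}$; the bottom-right block reduces to $I$ after the two occurrences of $B^{\top} S^{\dag} B C^{-1}$ cancel; and the top-right block collapses to $(I - S S^{\dag}) B C^{-1}$. Thus $X \, X^{\dag}$ has a clean upper-block-triangular form with identity in the bottom-right corner.

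Next, I would multiply this by $X$ on the right. The bottom row of $X \, X^{\dag} \, X$ is trivially $(B^{\top} \,\, C)$ because of the identity in the bottom-right block. For the top-right block one obtains $S S^{\dag} B + (I - S S^{\dag}) B C^{-1} C = B$ by direct cancellation. The decisive step is the top-left block, which equals
\[
S S^{\dag} A + (I - S S^{\dag}) B C^{-1} B^{\top} = S S^{\dag} \bigl(A - B C^{-1} B^{\top}\bigr) + B C^{-1} B^{\top} = S S^{\dag} S + B C^{-1} B^{\top};
\]
invoking the hypothesis, this is $S + B C^{-1} B^{\top} = A$. All four blocks therefore agree with those of $X$, so $X \, X^{\dag} \, X = X$.

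There is no serious obstacle: the formula in the statement is engineered precisely so that the $C^{-1}$ and $B^{\top} S^{\dag}$ contributions telescope, and the $\{1\}$-inverse identity $S S^{\dag} S = S$ is needed at exactly one place. The only care required is bookkeeping of block products and not inadvertently commuting $S^{\dag}$ past $S$ from the wrong side, since $S^{\dag}$ is only a $\{1\}$-inverse and need not satisfy $S^{\dag} S = S S^{\dag}$ nor $S^{\dag} S S^{\dag} = S^{\dag}$.
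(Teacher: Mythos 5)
Your verification is correct and complete: each block of $X X^{\dag} X$ is computed accurately (in particular, $X X^{\dag}$ does reduce to the block upper-triangular matrix with $S S^{\dag}$, $(I - S S^{\dag})BC^{-1}$, $\mathbf{O}$, and $I$ as its blocks), and the hypothesis $S S^{\dag} S = S$ is invoked at exactly the one place it is needed, with appropriate care about the one-sidedness of the $\{1\}$-inverse property. The paper itself states this lemma as a citation to the literature and gives no proof, so there is nothing to compare against; your direct block-multiplication check is the standard and appropriate argument.
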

\begin{lemma}\label{PFProA}
	For the Apollonian network $\mathcal{A}_{g+1}$ after $g+1$ $(g \ge 0)$ iterations,
	\begin{align}\label{PFAAtrans}
	\A_{g+1}^{\alpha,\beta}\A_{g+1}^{\beta,\alpha} = \D_g + 2\A_g.
	\end{align}
\end{lemma}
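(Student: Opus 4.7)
The plan is to prove the identity by giving a combinatorial interpretation of the matrix product. By definition, the $(i,j)$ entry of $\A_{g+1}^{\alpha,\beta}\A_{g+1}^{\beta,\alpha}$ equals $\sum_{k \in \mathcal{W}_{g+1}} \A_{g+1}^{\alpha,\beta}(i,k)\A_{g+1}^{\beta,\alpha}(k,j)$, which counts the number of new nodes $k \in \mathcal{W}_{g+1}$ that are simultaneously adjacent to $i$ and $j$ (both in $\mathcal{V}_g$). By the iterative construction, each new node in $\mathcal{W}_{g+1}$ is attached to the three vertices of a unique triangle of $\mathcal{A}_g$ generated at iteration $g$, and is adjacent only to those three vertices. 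Hence the $(i,j)$ entry equals the number of triangles of $\mathcal{A}_g$ generated at iteration $g$ that contain both $i$ and $j$.

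For diagonal entries $(i,i)$, this count is the number of triangles generated at iteration $g$ containing $i$, which coincides with the number of new neighbors that $i$ acquires at iteration $g+1$, i.e., $d_i^{(g+1)} - d_i^{(g)}$. The degree-doubling property $d_i^{(g+1)} = 2 d_i^{(g)}$ noted in Section~\ref{modelA} then gives diagonal value $d_i^{(g)}$, matching $(\D_g + 2\A_g)_{ii}$. For off-diagonal entries $(i,j)$ with $i \neq j$, the target value $2\A_g(i,j)$ reduces to the key combinatorial claim that every edge of $\mathcal{A}_g$ lies in exactly two triangles generated at iteration $g$, while no pair of non-adjacent vertices of $\mathcal{A}_g$ co-occurs in any such triangle.

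I would establish this claim by induction on $g$. The base case $g=0$ follows from the tetrahedral structure of $\mathcal{A}_0$, whose six edges each lie in exactly two of its four faces, all regarded as generated at iteration $0$. For the inductive step, since $\A_g^{\beta,\beta} = \mathbf{O}$ forbids adjacencies among new nodes, every triangle generated at iteration $g$ contains exactly one vertex of $\mathcal{W}_g$. I would then split edges of $\mathcal{A}_g$ into two types. For a new edge $e = (v, i)$ with $v \in \mathcal{W}_g$ attached to a triangle $(a,b,c)$ generated at iteration $g-1$ and $i \in \{a,b,c\}$, the two triangles generated at iteration $g$ that contain $e$ are those formed by $v$ together with each of the other two vertices of $(a,b,c)$. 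For an old edge $e \in \mathcal{E}_{g-1}$, the triangles of $\mathcal{A}_g$ generated at iteration $g$ containing $e$ correspond bijectively to new nodes $v \in \mathcal{W}_g$ adjacent to both endpoints of $e$, which in turn correspond to triangles of $\mathcal{A}_{g-1}$ generated at iteration $g-1$ containing $e$; by the inductive hypothesis there are exactly two such triangles. For non-adjacent $i,j$ in $\mathcal{V}_g$, no triangle of $\mathcal{A}_g$ contains both, so no triangle generated at iteration $g$ can either.

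The main obstacle is the careful bookkeeping needed in the inductive step: one must track not just the combinatorics of triangles but also their generation index, ensuring that the recursion for old edges in $\mathcal{A}_g$ reduces cleanly to the corresponding count for those same edges in $\mathcal{A}_{g-1}$. Once this is handled, collecting the diagonal contribution $\D_g$ with the off-diagonal contribution $2\A_g$ yields the identity.
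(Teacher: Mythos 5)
Your proof is correct and follows essentially the same route as the paper's: both compute the $(i,j)$ entry of $\A_{g+1}^{\alpha,\beta}\A_{g+1}^{\beta,\alpha}$ as the number of common new neighbours of $i$ and $j$ in $\mathcal{W}_{g+1}$, identify the diagonal entries with $d_i^{(g)}$ via the degree-doubling property, and reduce the off-diagonal entries to the fact that each edge of $\mathcal{A}_g$ lies in exactly two triangles of the latest generation. The only difference is that you supply an inductive proof of this last combinatorial fact, which the paper simply asserts; your added argument is sound.
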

\begin{proof}
	To prove $\A_{g+1}^{\alpha,\beta}\A_{g+1}^{\beta,\alpha} =  \D_g + 2\A_g$, it is sufficient to show that the corresponding elements of the two matrices  $\A_{g+1}^{\alpha,\beta}\A_{g+1}^{\beta,\alpha} $ and $\D_g + 2\A_g$  are equal to each  other. For the convenience of description, define ${\bf{Q}}_g=\D_g + 2\A_g$ and ${\bf{Z}}_g=\A_{g+1}^{\alpha,\beta}\A_{g+1}^{\beta,\alpha}$. For matrix ${\bf{Q}}_g$, its diagonal entries are ${\bf{Q}}_g(i,i)=d_i^{(g)}$, and non-diagonal elements are ${\bf{Q}}_g(i,j)=2\A_g(i,j)$. Let ${\bf{Z}}_g(i,j)$ be the $(i,j)$th entry of matrix ${\bf{Z}}_g$. Below, we will show that  ${\bf{Z}}_g(i,j)={\bf{Q}}_g(i,j)$ for all $i,j \in \mathcal{V}$. 
	
	We first partition matrix  $\A_{g+1}^{\beta,\alpha}$ into $N_g$ column vectors as
	\begin{equation}
	\A_{g+1}^{\beta,\alpha} = (x_1,x_2,\ldots,x_{N_g}), \nonumber
	\end{equation}
	where each $x_i$ ($i=1,2,\ldots,N_g$) is a $W_{g+1}$-dimensional column vector, given by  $x_i=(x_{i,N_g+1},x_{i,N_g+2},\ldots, x_{i,N_{g+1}})^\top$ that describes the adjacency relation between node $i \in \alpha$ and all the nodes in $\beta$.  Because  $\A_{g+1}^{\alpha,\beta}=\left(\A_{g+1}^{\beta,\alpha}\right)^\top$, one has  $\A_{g+1}^{\alpha,\beta}=(x_1,x_2,\ldots,x_{N_g})^{\top}$ and 
	\begin{align}
	\A_{g+1}^{\alpha,\beta}\A_{g+1}^{\beta,\alpha} =& (x_1,x_2,\ldots,x_{N_g})^{\top} (x_1,x_2,\ldots,x_{N_g}) \nonumber\\
	=& (x_i^\top x_j )_{N_g \times N_g}.\nonumber
	\end{align}
	Next, we determine the entries ${\bf{Z}}_g(i,j)$ of $\A_{g+1}^{\alpha,\beta}\A_{g+1}^{\beta,\alpha}$ by distinguishing two cases: $i = j$ and $i \neq j$.
	
	For the first case $i = j$, the diagonal entry of ${\bf{Z}}_g$ is ${\bf{Z}}_g(i,i) = x_i^\top x_i $, which is in fact equal to the number of $i$'s new  neighboring nodes in $\beta$. Thus, ${\bf{Z}}_g(i,i) = d_i^{(g+1)}-d_i^{(g)} = d_i^{(g)}= {\bf{Q}}_g(i,i)$.
	
	For the second case $i \neq j$, ${\bf{Z}}_g(i,j)$ is  the $(i,j)$th non-diagonal entry of matrix ${\bf{Z}}_g$ which can be evaluated as
	\begin{align*}
	{\bf{Z}}_g(i,j) &= x_i^\top x_j =  \sum_{k \in \beta} (x_{i,k}x_{j,k})  \\
	&= \sum_{k \in \beta} (\A_{g+1}(i,k)\A_{g+1}(j,k))
	\\
	&= \sum_{\substack{\A_{g+1}(i,k) = 1 \\ \A_{g+1}(j,k) = 1 }}\A_{g}(i,j)
	\\
	&= 2\A_{g}(i,j) ={\bf{Q}}_g(i,j)\,,
	\end{align*}
	where we use the fact that each edge is included in two triangles created in iteration $g+1$. 
\end{proof}

For any pair of nodes $i$ and $j$ in network  $\mathcal{A}_g$, let~$\Omega_{ij}^{(g)}$~denote their resistance distance. When the network grows from iteration $g$ to $g+1$, the resistance distance  evolves according to the  relation given in the following lemma.
\begin{lemma}\label{V2V}
	Let $i,j \in \mathcal{V}_{g}$ be two old nodes in network $\mathcal{A}_{g+1}$ with $g\ge 0$. Then, the resistance distances $\Omega_{ij}^{(g)}$ and $\Omega_{ij}^{(g+1)}$ satisfy
	\begin{align}\label{oo}
	\Omega_{ij}^{(g+1)} = \frac{3}{5}\Omega_{ij}^{(g)}.
	\end{align}
\end{lemma}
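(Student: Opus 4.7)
The plan is to use the block form of $\LL_{g+1}$ (already recorded just above the statement) together with Lemma~\ref{LemmaBlock1inv} to obtain a $\{1\}$-inverse of $\LL_{g+1}$ whose top-left $N_g \times N_g$ block is simply a scalar multiple of a $\{1\}$-inverse of $\LL_g$. Lemma~\ref{efpro1} then converts this matrix identity into the claimed resistance identity, since $i,j \in \mathcal{V}_g$ lie in the ``old-node'' block $\alpha$.

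Concretely, take
\[
A = 2\D_g - \A_g, \quad B = -\A_{g+1}^{\alpha,\beta}, \quad B^{\top} = -\A_{g+1}^{\beta,\alpha}, \quad C = 3\II,
\]
so that $C^{-1} = \tfrac{1}{3}\II$ exists. I would first compute the Schur complement
\[
S \;=\; A - B C^{-1} B^{\top} \;=\; 2\D_g - \A_g - \tfrac{1}{3}\,\A_{g+1}^{\alpha,\beta}\A_{g+1}^{\beta,\alpha}.
\]
The point of Lemma~\ref{PFProA} (which gives $\A_{g+1}^{\alpha,\beta}\A_{g+1}^{\beta,\alpha} = \D_g + 2\A_g$) is precisely to reduce this to a multiple of $\LL_g$: substituting yields
\[
S \;=\; 2\D_g - \A_g - \tfrac{1}{3}(\D_g + 2\A_g) \;=\; \tfrac{5}{3}\bigl(\D_g - \A_g\bigr) \;=\; \tfrac{5}{3}\LL_g .
\]
Hence $\tfrac{3}{5}\LL_g^{\dag}$ is a $\{1\}$-inverse of $S$ whenever $\LL_g^{\dag}$ is one of $\LL_g$.

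Plugging $S^{\dag} = \tfrac{3}{5}\LL_g^{\dag}$ into the block formula of Lemma~\ref{LemmaBlock1inv} produces a $\{1\}$-inverse $\LL_{g+1}^{\dag}$ of $\LL_{g+1}$ whose top-left $N_g \times N_g$ block equals $\tfrac{3}{5}\LL_g^{\dag}$. Since any two old nodes $i,j \in \mathcal{V}_g$ correspond to indices inside this $\alpha$-block, Lemma~\ref{efpro1} gives
\[
\Omega_{ij}^{(g+1)} \;=\; \LL_{g+1,ii}^{\dag} + \LL_{g+1,jj}^{\dag} - \LL_{g+1,ij}^{\dag} - \LL_{g+1,ji}^{\dag} \;=\; \tfrac{3}{5}\bigl(\LL_{g,ii}^{\dag} + \LL_{g,jj}^{\dag} - \LL_{g,ij}^{\dag} - \LL_{g,ji}^{\dag}\bigr) \;=\; \tfrac{3}{5}\,\Omega_{ij}^{(g)},
\]
which is exactly~\eqref{oo}.

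The only non-mechanical step is the Schur-complement computation; everything else is bookkeeping with the block $\{1\}$-inverse. The main obstacle I would be alert to is the algebraic ``miracle'' $S = \tfrac{5}{3}\LL_g$, which rests entirely on Lemma~\ref{PFProA}. Its hypothesis in turn uses a structural property of the Apollonian construction that I would want to double-check: each pre-existing edge $\{i,j\} \in \mathcal{E}_g$ lies on exactly two triangles created at iteration $g$ (so that $i$ and $j$ share exactly two new common neighbors in $\beta$), which is what makes the off-diagonal of $\A_{g+1}^{\alpha,\beta}\A_{g+1}^{\beta,\alpha}$ equal to $2\A_g(i,j)$. Once that combinatorial fact is in hand, the rest of the proof is a one-line block-matrix manipulation, and the universal $\tfrac{3}{5}$ shrinkage factor for resistance distances between old nodes follows with no further work.
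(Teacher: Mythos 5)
Your proposal is correct and follows essentially the same route as the paper's own proof: apply Lemma~\ref{LemmaBlock1inv} to the block form of $\LL_{g+1}$, use Lemma~\ref{PFProA} to reduce the Schur complement to $\tfrac{5}{3}\LL_g$, and convert via Lemma~\ref{efpro1}. The combinatorial fact you flag (each old edge gains exactly two common new neighbors) is exactly what the paper verifies in its proof of Lemma~\ref{PFProA}.
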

\begin{proof}
	For network $\mathcal{A}_{g+1}$, any   $\{1\}-$inverse of its Laplacian $\LL_{g+1}^{\dag}$ can be written in block form as 
	\begin{align}
	\LL_{g+1}^{\dag} =
	\left(
	\begin{array}{cc}
	\left(\LL_{g+1}^{\alpha,\alpha}\right)^{\dag} & \left(\LL_{g+1}^{\alpha,\beta}\right)^{\dag} \\
	\left(\LL_{g+1}^{\beta,\alpha}\right)^{\dag}  & \left(\LL_{g+1}^{\beta,\beta}\right)^{\dag}  
	\end{array}
	\right).
	\end{align}
	By Lemmas~\ref{LemmaBlock1inv} and \ref{PFProA}, the submatrix $\left(\LL_{g+1}^{\alpha,\alpha}\right)^{\dag}$ can be expressed as 
	\begin{align}\label{oosubmat}
	\left(\LL_{g+1}^{\alpha,\alpha}\right)^{\dag}
	=& \left(2\D_g - \A_g -(-\A_{g+1}^{\alpha,\beta})(3\II)^{-1}(-\A_{g+1}^{\beta,\alpha}) \right)^{\dag} \nonumber\\
	=& \left((2\D_g-\A_g)-\frac{1}{3}(\D_g + 2 \A_g)\right)^{\dag} \nonumber\\
	=& \left(\frac{5}{3}\LL_g\right)^{\dag}
	= \frac{3}{5}\LL_g^{\dag}.
	\end{align}
	By Lemma~\ref{efpro1} and \eqref{oosubmat},
	for two old nodes $i,j \in \mathcal{V}_g$, their resistance distance between two consecutive iterations obeys
	\begin{align}
	\setlength{\parindent}{1.3em}\indent&\Omega_{ij}^{(g+1)} \nonumber\\
	=& \left(\LL_{g+1}^{\alpha,\alpha}\right)^{\dag}(i,i) + \left(\LL_{g+1}^{\alpha,\alpha}\right)^{\dag}(j,j)
	- \left(\LL_{g+1}^{\alpha,\alpha}\right)^{\dag}(i,j) - \left(\LL_{g+1}^{\alpha,\alpha}\right)^{\dag}(j,i)\nonumber\\
	=& \frac{3}{5} \left( \LL^{\dag}_{g}(i,i) + \LL^{\dag}_{g}(j,j)
	- \LL^{\dag}_{g}(i,j) - \LL^{\dag}_{g}(j,i) \right) \nonumber\\
	=& \frac{3}{5} \Omega_{ij}^{(g)},
	\end{align}
	which completes the proof.
\end{proof}


After obtaining the evolution relation of  resistance distance between any pair of old nodes in $\mathcal{A}_{g+1}$, we continue to show that the effective resistance between any other node pairs in $\mathcal{A}_{g+1}$ can be presented in terms of effective resistances for some pairs of old nodes in $\mathcal{A}_{g}$. 
To gain this goal, we introduce some additional  quantities.
For two node sets $F$ and $Y$, define
\begin{align}
\Omega_{F,Y}^{(g)} = \displaystyle \sum_{i \in F ,j \in Y} \Omega_{ij}^{(g)}.
\end{align}
For a node $i \in \mathcal{W}_{g+1}$ in $\mathcal{A}_{g+1}$, let $ {p,q,r} $ be its three neighbors, all of which are in $\mathcal{V}_g$ and constitute the set $\Delta_i= \{p,q,r\} $.  Define 
\begin{align}\label{comb4}
\Omega_{\Delta_i}^{(g)} = \Omega_{pq}^{(g)} + \Omega_{qr}^{(g)} + \Omega_{rp}^{(g)}.
\end{align}

\begin{lemma}\label{iDi}
	In the Apollonian network $\mathcal{A}_{g+1}$ with $g\ge0$, for any  new node $i \in \mathcal{W}_{g+1}$, the following relation holds: 
	\begin{align}
	\Omega_{i,\Delta_i}^{(g+1)} =& 1 + \frac{1}{3} \Omega_{\Delta_i}^{(g+1)}.
	\end{align}
\end{lemma}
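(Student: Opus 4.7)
The plan is to apply the sum rule for effective resistances (Lemma \ref{sumRule}) to the newly created node $i \in \mathcal{W}_{g+1}$. The key observation is that, because $i$ has just been created at iteration $g+1$, its degree in $\mathcal{A}_{g+1}$ is exactly $d_i^{(g+1)}=3$ and its entire neighborhood is $\Delta_i=\{p,q,r\}$.

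First I would instantiate Lemma \ref{sumRule} in $\mathcal{A}_{g+1}$ with the pair $(i,j)$ where $j$ is successively chosen to be each of $p$, $q$, $r$. Writing $\Omega$ for $\Omega^{(g+1)}$ in this paragraph, the rule with $j=p$ reads
\begin{equation*}
3\,\Omega_{ip} + (\Omega_{ip}-\Omega_{pp}) + (\Omega_{iq}-\Omega_{pq}) + (\Omega_{ir}-\Omega_{pr}) = 2,
\end{equation*}
which, using $\Omega_{pp}=0$, simplifies to $4\,\Omega_{ip}+\Omega_{iq}+\Omega_{ir}-\Omega_{pq}-\Omega_{pr}=2$. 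The analogous identities with $j=q$ and $j=r$ are obtained by symmetry.

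Then I would add the three resulting equations. Each of the three resistances $\Omega_{ip},\Omega_{iq},\Omega_{ir}$ appears once with coefficient $4$ (from its own equation) and twice with coefficient $1$ (from the other two), so the left side contributes $6(\Omega_{ip}+\Omega_{iq}+\Omega_{ir})=6\,\Omega_{i,\Delta_i}^{(g+1)}$. Each of the three pairwise resistances $\Omega_{pq},\Omega_{qr},\Omega_{rp}$ appears with coefficient $-1$ in exactly two of the equations, giving $-2\,\Omega_{\Delta_i}^{(g+1)}$. The right side sums to $6$. Dividing by $6$ yields
\begin{equation*}
\Omega_{i,\Delta_i}^{(g+1)} = 1 + \tfrac{1}{3}\,\Omega_{\Delta_i}^{(g+1)},
\end{equation*}
which is exactly the claimed identity.

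There is no real obstacle here: the argument is a direct bookkeeping consequence of the sum rule, and the only fact specific to the Apollonian construction that is used is that a freshly inserted node has degree $3$ and is adjacent precisely to the three corners of the triangle in which it was placed. The mild subtlety worth stating explicitly in the write-up is that the sum rule is applied in $\mathcal{A}_{g+1}$ (so all resistances involved are $\Omega^{(g+1)}$), rather than in $\mathcal{A}_g$; this keeps the identity self-contained within a single iteration and makes it ready to be combined with Lemma \ref{V2V} later to recursively bound $\Omega_{\Delta_i}$ in terms of quantities from $\mathcal{A}_g$.
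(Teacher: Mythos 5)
Your proposal is correct and follows essentially the same route as the paper: apply the sum rule of Lemma \ref{sumRule} in $\mathcal{A}_{g+1}$ to the pairs $(i,p)$, $(i,q)$, $(i,r)$ using $d_i^{(g+1)}=3$, sum the three identities, and observe that the cross terms collapse to $-2\,\Omega_{\Delta_i}^{(g+1)}$. The bookkeeping matches the paper's computation exactly (the paper writes the doubled sum as $\Omega_{\Delta_i,\Delta_i}^{(g+1)}=2\,\Omega_{\Delta_i}^{(g+1)}$), so no further changes are needed.
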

\begin{proof}
	By Lemma~\ref{sumRule}, for~$i \in \mathcal{W}_{g+1}$ and its three neighboring nodes $p$, $q$, and $r$ forming set $\Delta_i= \{p,q,r\}$, one has
	\begin{align}
	3 \Omega_{ip}^{(g+1)} + \Omega_{i,\Delta_i}^{(g+1)} - \Omega_{p,\Delta_i}^{(g+1)} = 2, \nonumber \\
	3 \Omega_{iq}^{(g+1)} + \Omega_{i,\Delta_i}^{(g+1)} - \Omega_{q,\Delta_i}^{(g+1)} = 2, \nonumber 
	\end{align}
	and
	\begin{equation}
	3 \Omega_{ir}^{(g+1)} + \Omega_{i,\Delta_i}^{(g+1)} - \Omega_{r,\Delta_i}^{(g+1)} = 2.\nonumber
	\end{equation}
	Summing these three equations leads to
	\begin{align}
	3 \Omega_{i,\Delta_i}^{(g+1)} + 3\Omega_{i,\Delta_i}^{(g+1)} - \Omega_{\Delta_i,\Delta_i}^{(g+1)} = 6, 
	\end{align}
	which can be recast as 
	\begin{align}
	\Omega_{i,\Delta_i}^{(g+1)}
	=& 1 + \frac{1}{6}\Omega_{\Delta_i,\Delta_i}^{(g+1)} 
	= 1 + \frac{1}{3} \Omega_{\Delta_i}^{(g+1)}.
	\end{align}
	This completes the proof.
\end{proof}

\begin{lemma}\label{V2L}
	In the Apollonian network $\mathcal{A}_{g+1}$ ($g\ge0$), for any pair of nodes $i$ and $j$, with $i \in \mathcal{W}_{g+1}$ and $j \in \mathcal{V}_g$, the following relation holds: 
	\begin{align}
	\Omega_{ij}^{(g+1)} = \frac{1}{3} \left( 1 - \frac{1}{3}\Omega_{\Delta_i}^{(g+1)} + \Omega_{\Delta_i,j}^{(g+1)} \right).
	\end{align}
\end{lemma}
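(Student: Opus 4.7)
The plan is to apply the resistance sum rule (Lemma~\ref{sumRule}) directly to the new node $i\in\mathcal{W}_{g+1}$, whose degree in $\mathcal{A}_{g+1}$ is exactly $3$ by construction, and whose neighborhood is $\Delta_i=\{p,q,r\}\subseteq\mathcal{V}_g$. For the fixed node $j\in\mathcal{V}_g$, the sum rule with source $i$ yields
\begin{equation*}
3\,\Omega_{ij}^{(g+1)} + \sum_{k\in\Delta_i}\left(\Omega_{ik}^{(g+1)}-\Omega_{jk}^{(g+1)}\right) = 2,
\end{equation*}
which, written compactly with the set-sum notation $\Omega_{F,Y}^{(g)}=\sum_{u\in F,v\in Y}\Omega_{uv}^{(g)}$, becomes
\begin{equation*}
3\,\Omega_{ij}^{(g+1)} + \Omega_{i,\Delta_i}^{(g+1)} - \Omega_{j,\Delta_i}^{(g+1)} = 2.
\end{equation*}

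Next I would eliminate $\Omega_{i,\Delta_i}^{(g+1)}$ using the previously established Lemma~\ref{iDi}, which gives $\Omega_{i,\Delta_i}^{(g+1)} = 1 + \tfrac{1}{3}\Omega_{\Delta_i}^{(g+1)}$. Substituting this into the displayed equation and solving for $\Omega_{ij}^{(g+1)}$ yields exactly the claimed identity
\begin{equation*}
\Omega_{ij}^{(g+1)} = \frac{1}{3}\left(1 - \frac{1}{3}\Omega_{\Delta_i}^{(g+1)} + \Omega_{\Delta_i,j}^{(g+1)}\right),
\end{equation*}
after noting the trivial symmetry $\Omega_{j,\Delta_i}^{(g+1)} = \Omega_{\Delta_i,j}^{(g+1)}$.

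There is essentially no obstacle here: the whole argument is a one-line application of the sum rule followed by a substitution from the preceding lemma. The only minor point to verify carefully is that the sum rule is being applied in $\mathcal{A}_{g+1}$ (not $\mathcal{A}_g$), so that the degree of $i$ is indeed $3$ and $\Delta_i$ is the correct neighbor set; both are immediate from the construction in Section~\ref{modelA}, which adds each new node at iteration $g+1$ with precisely three edges to the vertices of a triangle generated at iteration $g$. Thus the lemma reduces to two lines of algebra once Lemmas~\ref{sumRule} and~\ref{iDi} are invoked.
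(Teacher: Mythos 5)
Your proposal is correct and follows exactly the paper's own argument: apply Lemma~\ref{sumRule} at the degree-$3$ new node $i$, then substitute $\Omega_{i,\Delta_i}^{(g+1)} = 1 + \tfrac{1}{3}\Omega_{\Delta_i}^{(g+1)}$ from Lemma~\ref{iDi} and solve for $\Omega_{ij}^{(g+1)}$. No gaps; your care in checking that the sum rule is applied in $\mathcal{A}_{g+1}$ is well placed but, as you note, immediate from the construction.
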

\begin{proof}
	According to Lemma~\ref{sumRule}, for~$i \in \mathcal{W}_{g+1}$ and $j \in \mathcal{V}_g$,  one obtains
	\begin{align}
	d_i^{(g+1)} \Omega_{ij}^{(g+1)} + \Omega_{\Delta_i,i}^{(g+1)} - \Omega_{\Delta_i,j}^{(g+1)} = 2. \nonumber
	\end{align}
	Considering $d_i^{(g+1)}=3$ and applying  Lemma~\ref{iDi} yield 
	\begin{align}
	\Omega_{ij}^{(g+1)}
	=& \frac{1}{3}\left(2 - \Omega_{\Delta_i,i}^{(g+1)} + \Omega_{\Delta_i,j}^{(g+1)} \right)\nonumber \\
	=& \frac{1}{3}\left(1 - \frac{1}{3} \Omega_{\Delta_i}^{(g+1)} + \Omega_{\Delta_i,j}^{(g+1)}\right),  \nonumber
	\end{align}
	as required.
\end{proof}

\begin{lemma}\label{newnodes}
	In the Apollonian network $\mathcal{A}_{g+1}$ ($g\ge0$), for any pair of different new nodes $i$ and $j$ in  $\mathcal{W}_{g+1}$, 
	\begin{align}
	\Omega_{ij}^{(g+1)} = \frac{2}{3} - \frac{1}{9}\left(\Omega_{\Delta_i}^{(g+1)} + \Omega_{\Delta_j}^{(g+1)}\right) +\frac{1}{9}\Omega_{\Delta_i,\Delta_j}^{(g+1)} .
	\end{align}
\end{lemma}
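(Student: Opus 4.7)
The plan is to combine the sum rule from Lemma \ref{sumRule}, applied at the new node $i$, with the two preceding lemmas (Lemma \ref{iDi} and Lemma \ref{V2L}) to eliminate all resistances involving the new nodes themselves, leaving only resistances among the old neighbor sets $\Delta_i$ and $\Delta_j$.

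First, I would invoke Lemma \ref{sumRule} at the new node $i$ in $\mathcal{A}_{g+1}$, using the fact that $d_i^{(g+1)} = 3$ and that its neighbors form $\Delta_i$. This yields
\begin{equation*}
3\Omega_{ij}^{(g+1)} + \Omega_{i,\Delta_i}^{(g+1)} - \Omega_{j,\Delta_i}^{(g+1)} = 2.
\end{equation*}
The term $\Omega_{i,\Delta_i}^{(g+1)}$ is handled directly by Lemma \ref{iDi}, which gives $\Omega_{i,\Delta_i}^{(g+1)} = 1 + \tfrac{1}{3}\Omega_{\Delta_i}^{(g+1)}$. The main task is then to rewrite $\Omega_{j,\Delta_i}^{(g+1)}$ in a form that only involves old-node resistances.

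For this second step I would apply Lemma \ref{V2L} to each pair $(j,k)$ with $k \in \Delta_i \subset \mathcal{V}_g$; this is legitimate since $j \in \mathcal{W}_{g+1}$ is a new node and each $k$ is old. Summing the identity
\begin{equation*}
\Omega_{jk}^{(g+1)} = \tfrac{1}{3}\bigl(1 - \tfrac{1}{3}\Omega_{\Delta_j}^{(g+1)} + \Omega_{\Delta_j,k}^{(g+1)}\bigr)
\end{equation*}
over the three nodes $k \in \Delta_i$ and using the definition of $\Omega_{\Delta_j,\Delta_i}^{(g+1)}$ as a double sum gives
\begin{equation*}
\Omega_{j,\Delta_i}^{(g+1)} = 1 - \tfrac{1}{3}\Omega_{\Delta_j}^{(g+1)} + \tfrac{1}{3}\Omega_{\Delta_i,\Delta_j}^{(g+1)}.
\end{equation*}

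Finally, I substitute both expressions back into the sum-rule identity and solve for $\Omega_{ij}^{(g+1)}$; the constants $+1$ and $-1$ cancel, and dividing through by $3$ produces exactly the claimed formula. The computation is essentially a bookkeeping exercise, so the only real obstacle is keeping the two index sets $\Delta_i$ and $\Delta_j$ cleanly separated when expanding the sum and to observe that Lemma \ref{V2L} applies uniformly to each summand since every neighbor of a new node lies in $\mathcal{V}_g$. No additional structural input about the Apollonian construction is needed beyond the already-proven lemmas.
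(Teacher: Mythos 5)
Your proposal is correct and follows essentially the same route as the paper: apply the sum rule of Lemma \ref{sumRule} at the new node $i$, eliminate $\Omega_{i,\Delta_i}^{(g+1)}$ via Lemma \ref{iDi}, and expand $\Omega_{j,\Delta_i}^{(g+1)}$ by summing Lemma \ref{V2L} over the three old neighbors $k\in\Delta_i$. The bookkeeping you describe (the constants cancelling to leave $2$, then division by $3$) matches the paper's computation exactly.
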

\begin{proof}
	For any two nodes $i$ and $j$ belonging to $\mathcal{W}_{g+1}$, by Lemma~\ref{sumRule}, one has 
	\begin{align}
	d_i^{(g+1)} \Omega_{ij}^{(g+1)} + \Omega_{\Delta_i,i}^{(g+1)} - \Omega_{\Delta_i,j}^{(g+1)} = 2. \nonumber
	\end{align}
	Using $d_i^{(g+1)}=3$, Lemma~\ref{iDi}, and Lemma~\ref{V2L}, it follows that 
	\begin{align}
	&\setlength{\parindent}{1.3em}\indent\Omega_{ij}^{(g+1)} = \frac{1}{3}\left(2 -  \Omega_{i,\Delta_i}^{(g+1)} + \Omega_{j,\Delta_i}^{(g+1)}\right) \nonumber\\
	&= \frac{1}{3}\left(2 -  \Omega_{i,\Delta_i}^{(g+1)} + \sum_{k \in \mathcal{N}(i)} \Omega_{j,k}^{(g+1)}\right)\nonumber \\
	&= \frac{1}{3}\left(2 -  \Omega_{i,\Delta_i}^{(g+1)} + \sum_{k \in \mathcal{N}(i)}\frac{1}{3} \left( 1 - \frac{1}{3}\Omega_{\Delta_j}^{(g+1)} + \Omega_{\Delta_j,k}^{(g+1)} \right) \right)\nonumber \\
	&= \frac{1}{3}\left(1 - \frac{1}{3} \Omega_{\Delta_i}^{(g+1)} +
	\left(1-\frac{1}{3}\Omega_{\Delta_j}^{(g+1)}+\frac{1}{3}\Omega_{\Delta_i,\Delta_j}^{(g+1)}\right) \right) \nonumber\\
	&= \frac{2}{3} - \frac{1}{9}\left(\Omega_{\Delta_i}^{(g+1)}+\Omega_{\Delta_j}^{(g+1)}\right)+\frac{1}{9}\Omega_{\Delta_i,\Delta_j}^{(g+1)},
	\end{align} 
	as claimed.
\end{proof}
\subsection{Scaling of the Cover Time} \label{Cover}
We are now in position to determine the leading scaling for the cover time $C(\mathcal{A}_{g})$ of the Apollonian network $\mathcal{A}_{g}$, by applying the connection between cover time and resistance diameter. Before doing so, we provide an upper bound for the resistance distance between any pair of nodes in $\mathcal{A}_{g}$. 
\begin{lemma}\label{efub}
	For any pair of nodes $i$ and $j$ in the Apollonian network $\mathcal{A}_{g}$ with $g \geq0$, their resistance distance satisfies $\Omega_{ij}^{(g)}\leq \frac{5}{3} $.
\end{lemma}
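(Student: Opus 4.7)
The plan is to prove the bound by induction on the iteration index $g$, using the three propagation lemmas (Lemma~\ref{V2V}, Lemma~\ref{V2L}, and Lemma~\ref{newnodes}) to control every type of node pair that can arise in $\mathcal{A}_{g+1}$.

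For the base case $g=0$, the network $\mathcal{A}_0$ is the tetrahedron $K_4$. By the standard symmetry argument (injecting unit current at $u$ and extracting at $v$, and noting the remaining two nodes are at equal potential so the edge between them carries no current), one computes $\Omega^{(0)}_{ij}=\tfrac{1}{2}$ for every pair, which is well below $\tfrac{5}{3}$.

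For the inductive step, assume $\Omega^{(g)}_{ij}\le\tfrac{5}{3}$ for all $i,j\in\mathcal{V}_g$, and split pairs $(i,j)$ in $\mathcal{A}_{g+1}$ into three cases. If both $i,j\in\mathcal{V}_g$, Lemma~\ref{V2V} contracts distances by a factor $\tfrac{3}{5}$, yielding $\Omega^{(g+1)}_{ij}\le 1$. If $i\in\mathcal{W}_{g+1}$ and $j\in\mathcal{V}_g$, I apply Lemma~\ref{V2L}; the nonnegative term $-\tfrac{1}{9}\Omega^{(g+1)}_{\Delta_i}$ can only help, while the positive term $\Omega^{(g+1)}_{\Delta_i,j}$ is a sum of three old-old resistances in $\mathcal{A}_{g+1}$, each bounded via Lemma~\ref{V2V} and the inductive hypothesis by $\tfrac{3}{5}\cdot\tfrac{5}{3}=1$, giving $\Omega^{(g+1)}_{ij}\le\tfrac{1}{3}(1+3)=\tfrac{4}{3}$. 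The remaining case $i,j\in\mathcal{W}_{g+1}$ is handled by Lemma~\ref{newnodes}: dropping the nonpositive contribution $-\tfrac{1}{9}(\Omega^{(g+1)}_{\Delta_i}+\Omega^{(g+1)}_{\Delta_j})$, and bounding $\Omega^{(g+1)}_{\Delta_i,\Delta_j}=\tfrac{3}{5}\Omega^{(g)}_{\Delta_i,\Delta_j}\le \tfrac{3}{5}\cdot 9\cdot\tfrac{5}{3}=9$, yields $\Omega^{(g+1)}_{ij}\le \tfrac{2}{3}+1=\tfrac{5}{3}$.

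The main (and only subtle) point is the choice of the constant $\tfrac{5}{3}$. Running the new-new case with an abstract bound $C$ in place of $\tfrac{5}{3}$ produces the self-consistency inequality $\tfrac{2}{3}+\tfrac{3C}{5}\le C$, which forces $C\ge\tfrac{5}{3}$. Thus $\tfrac{5}{3}$ is exactly the smallest constant for which the induction closes, and this is where the inductive step is tight; the other two cases are comfortably slack. Putting all three cases together completes the induction and establishes $\Omega^{(g)}_{ij}\le\tfrac{5}{3}$ for every $g\ge 0$ and every pair $i,j\in\mathcal{V}_g$.
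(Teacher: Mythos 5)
Your proof is correct and follows essentially the same route as the paper: induction on $g$ with the base case $\Omega^{(0)}_{ij}=\tfrac12$ on the tetrahedron, and the identical three-case analysis via Lemmas~\ref{V2V}, \ref{V2L}, and \ref{newnodes}, arriving at the same numerical bounds ($1$, $\tfrac43$, and $\tfrac53$ respectively). The only quibble is that the dropped term $-\tfrac19\Omega^{(g+1)}_{\Delta_i}$ is nonpositive rather than ``nonnegative'' (your conclusion that discarding it can only loosen the upper bound is still right), and your closing observation that $\tfrac53$ is the smallest constant for which the induction closes is a nice remark not present in the paper.
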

\begin{proof}
	We prove this lemma by induction.
	When $g = 0$, $\mathcal{A}_g$~is a tetrahedron, the result is true since $\Omega_{ij}^{(0)} = \frac{1}{2} \leq \frac{5}{3}$. 
	Suppose that the relation  holds  true for $g=n$, that is, $\Omega_{ij}^{(n)} \leq \frac{5}{3}$. We next prove the relation $\Omega_{ij}^{(g)} \leq \frac{5}{3}$ also holds for $g=n+1$. Note that for all pairs of nodes  $i$ and $j$ in network  $\mathcal{A}_{n+1}$, they can be categorized into three cases:  (i) $i \in \mathcal{V}_{n}$, $j \in \mathcal{V}_{n}$; (ii) $i \in \mathcal{W}_{n+1}$, $j \in \mathcal{V}_{n}$; and (iii) $i \in \mathcal{W}_{n+1}$, $j \in \mathcal{W}_{n+1}$.
	We next prove the relation $\Omega_{ij}^{(n+1)} \leq \frac{5}{3}$ by distinguishing these three cases. 
	
	For the first case of $i \in \mathcal{V}_{n}$ and $j \in \mathcal{V}_{n}$, by~\eqref{oo}, one has 
	\begin{equation}
	\Omega_{ij}^{(n+1)} = \frac{3}{5}\Omega_{ij}^{(n)} \leq \frac{5}{3}.
	\end{equation}
	
	For the second case of  $i \in \mathcal{W}_{n+1}$ and $j \in \mathcal{V}_{n}$, by Lemma~\ref{V2L} and induction assumption, one obtains 
	\begin{align}
	\Omega_{ij}^{(n+1)} =& \frac{1}{3} \left( 1 - \frac{1}{3}\Omega_{\Delta_i}^{(n+1)} + \Omega_{\Delta_i,j}^{(n+1)} \right) \nonumber \\
	\leq& \frac{1}{3} \left( 1 + \frac{3}{5}\Omega_{\Delta_i,j}^{(n)} \right) \nonumber \\
	\leq& \frac{1}{3} \left( 1 + \frac{3}{5}\times 3
	\times \frac{5}{3} \right) 
	\leq \frac{5}{3}.\nonumber
	\end{align}
	
	For the third case of $i \in \mathcal{W}_{n+1}$ and $j \in \mathcal{W}_{n+1}$, by Lemma~\ref{newnodes} and induction assumption, one gets 
	\begin{align}
	\Omega_{ij}^{(n+1)} =& \frac{2}{3} - \frac{1}{9}\left(\Omega_{\Delta_i}^{(n+1)}+\Omega_{\Delta_j}^{(n+1)}\right)+\frac{1}{9}\Omega_{\Delta_i,\Delta_j}^{(n+1)} \nonumber \\
	\leq&\frac{2}{3}+\frac{1}{9}\Omega_{\Delta_i,\Delta_j}^{(n+1)}\nonumber \\
	\leq&\frac{2}{3}+\frac{1}{9} \times \frac{3}{5}\Omega_{\Delta_i,\Delta_j}^{(n)}\nonumber \\
	\leq&\frac{2}{3}+\frac{1}{9} \times \frac{3}{5} \times 9 \times \frac{5}{3}
	\leq\frac{5}{3} .\nonumber 
	\end{align}
	This completes the proof.
\end{proof}

\begin{theorem}
	For $g \geq 0$,  the leading scaling of cover time $C(\mathcal{A}_g) $ for the Apollonian network  $\mathcal{A}_{g}$ is $N_g\log N_g$, namely, $C(\mathcal{A}_g) \sim N_g\log N_g$.  
\end{theorem}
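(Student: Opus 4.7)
The plan is to invoke the two-sided bound from~\eqref{commuteR}, which states $N \log N \le C(\mathcal{G}) \le 2E \cdot R \log N$ for any connected graph $\mathcal{G}$. The lower bound immediately gives $C(\mathcal{A}_g) \ge N_g \log N_g$, so all that remains is a matching upper bound of the same order.

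First I would collect the two structural facts already established for $\mathcal{A}_g$ in Section~\ref{modelA}: the edge count $E_g = 6 \cdot 3^g$ and the node count $N_g = 2 \cdot 3^g + 2$, which together give $2E_g = 6N_g - 12 = \Theta(N_g)$. This quantitative sparsity (average degree tending to $6$) is exactly what one needs to turn the $2E \cdot R \log N$ factor into a linear-in-$N$ contribution. Then I would invoke Lemma~\ref{efub}, which bounds the resistance diameter of $\mathcal{A}_g$ by the constant $R \le 5/3$, independent of $g$.

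Plugging these two inputs into~\eqref{commuteR} yields $C(\mathcal{A}_g) \le 2E_g \cdot R \log N_g \le (5/3)(6 N_g - 12)\log N_g = O(N_g \log N_g)$, which matches the lower bound $C(\mathcal{A}_g) \ge N_g \log N_g$ and gives $C(\mathcal{A}_g) \sim N_g \log N_g$ as claimed.

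There is essentially no obstacle remaining at this stage: all the substance of the argument was absorbed into proving the constant-resistance-diameter bound in Lemma~\ref{efub}, which itself rests on the recursive identities of Lemmas~\ref{V2V}, \ref{iDi}, \ref{V2L}, and~\ref{newnodes} together with a three-case induction on whether each endpoint is an old node in $\mathcal{V}_n$ or a newly added node in $\mathcal{W}_{n+1}$. Once Lemma~\ref{efub} and the sparsity estimate $E_g = \Theta(N_g)$ are in hand, the theorem follows by a single application of the cover-time/resistance-diameter inequality, with no further case analysis needed.
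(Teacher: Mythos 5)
Your proposal is correct and follows essentially the same route as the paper: the paper's proof likewise combines the constant resistance-diameter bound $R\le 5/3$ from Lemma~\ref{efub} with the inequality~\eqref{commuteR} and the sparsity $E_g=\Theta(N_g)$ for the upper bound, and matches it against the universal lower bound $N\log N$ (which the paper attributes to~\cite{Fe95}). Your version merely makes the arithmetic $2E_g = 6N_g-12$ explicit, which the paper leaves implicit.
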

\begin{proof}
	By Lemma~\ref{efub}, for any $g \geq 0$, the resistance diameter of Apollonian network $\mathcal{A}_{g}$ is at most $\frac{5}{3}$, which together with~\eqref{commuteR}, indicates that the leading scaling for the upper bound of the cover time $C(\mathcal{A}_g)$ is $N_g\log N_g$. On the other hand, according to the result in~\cite{Fe95}, $N_g\log N_g$ is also the possible minimal scaling for the cover time $C(\mathcal{A}_g)$. Thus, we conclude that $C(\mathcal{A}_g) \sim N_g\log N_g$. 
\end{proof}
\section{Cover Time in Pseudofractal Scale-free Webs} \label{fractal}

In this section, we study the cover time in the  pseudofractal scale-free webs~\cite{DoGoMe02}. We will show that the behavior of the cover time is similar to that of the Apollonian networks.

The pseudofractal scale-free webs are also built iteratively. 
Let $\mathcal{F}_g$ denote the network after $g$ ($g \geq 0$) iterations. Initially $g=0$, $\mathcal{F}_0$ includes three nodes and three edges, forming a triangle. 
For every $g \geq 0$, the operation from $\mathcal{F}_g$ to $\mathcal{F}_{g+1}$ is as follows: For each edge in $\mathcal{F}_{g}$, one new node is generated and linked to both end nodes of this edge. 
Figure~\ref{consF} illustrates the first two iterations of the network. Let $N_{g}$ and $E_{g}$ denote, respectively, the number of nodes and edges in network $\mathcal{F}_g$. For all $g \geq 0$, $N_g = 3(3^g + 1)/2$ and $E_{g} = 3^{g+1}$. Thus, the pseudofractal scale-free webs are sparse with the average node degree being $4$. 

\begin{figure}
	\begin{center}
		\includegraphics[width=0.6\linewidth]{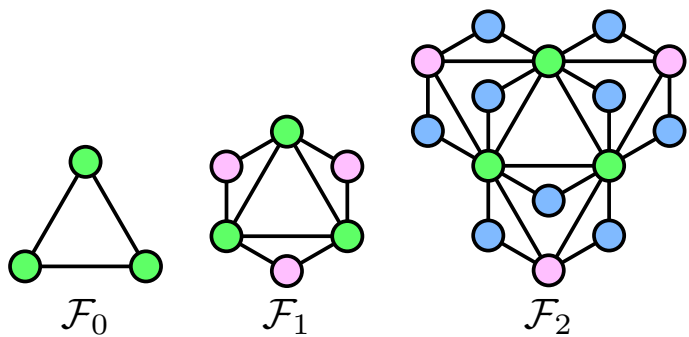}
		\caption{Construction procedure of the pseudofractal scale-free webs. }
		\label{consF}
	\end{center}
\end{figure}

The pseudofractal scale-free webs also display  the remarkable features found for many real networks~\cite{DoGoMe02}. They are scale-free with the node degree following a power-law distribution $P(d)\sim d^{-(1 + \ln 3/ \ln 2)}$. They also display the small-world effect, with their average shortest path distance scaling logarithmically with $N_g$ and their average clustering coefficient being $0.8$.

In a way similar to that of Apollonian networks, we can prove that for any pair of nodes in $\mathcal{F}_g$, its resistance distance is at most $3$, it is the same with the resistance diameter of $\mathcal{F}_g$. Based on this result, the leading behavior for the cover time $C(\mathcal{F}_g)$ of the pseudofractal scale-free web $\mathcal{F}_{g}$ is obtained, as summarized in the following theorem.
\begin{theorem}\label{coverf}
	For $g \geq 0$,  the dominating scaling of cover time $C(\mathcal{F}_g) $ for the  pseudofractal scale-free web $\mathcal{F}_{g}$ is $N_g\log N_g$, namely, $C(\mathcal{F}_g) \sim N_g\log N_g$. 
\end{theorem}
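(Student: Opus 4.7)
The plan is to mirror the proof structure used for the Apollonian networks in Section~\ref{Apol}, exploiting the self-similar construction of $\mathcal{F}_g$ to obtain recursions for effective resistances and, from them, a constant upper bound on the resistance diameter. Once such a bound is in hand, combining it with the sparsity of $\mathcal{F}_g$ (average degree tending to $4$) and the two-sided inequality~\eqref{commuteR} will immediately yield $C(\mathcal{F}_g) = \Theta(N_g \log N_g)$.

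First, I would partition $\LL_{g+1}$ into blocks according to the split $\mathcal{V}_{g+1} = \mathcal{V}_g \cup \mathcal{W}_{g+1}$. Under one iteration each old-node degree doubles and every new node in $\mathcal{W}_{g+1}$ has degree exactly $2$, so the $\beta$-$\beta$ block of $\D_{g+1}$ is $2\II$. The combinatorial identity analogous to Lemma~\ref{PFProA} is $\A_{g+1}^{\alpha,\beta}\A_{g+1}^{\beta,\alpha} = \D_g + \A_g$: the diagonal entry at an old node $i$ counts its newly added neighbors, which equals $d_i^{(g)}$, while a pair of old nodes $i,j$ shares exactly one common new neighbor iff $ij \in \mathcal{E}_g$. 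Feeding this identity into the Schur complement of Lemma~\ref{LemmaBlock1inv} gives $(2\D_g - \A_g) - \tfrac{1}{2}(\D_g+\A_g) = \tfrac{3}{2}\LL_g$, so $(\LL_{g+1}^{\alpha,\alpha})^{\dag} = \tfrac{2}{3}\LL_g^{\dag}$, which via Lemma~\ref{efpro1} yields the clean old--old recursion $\Omega_{ij}^{(g+1)} = \tfrac{2}{3}\,\Omega_{ij}^{(g)}$.

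Next I would handle the remaining two cases (a new node versus an old node, and two new nodes) by repeatedly applying the sum rule of Lemma~\ref{sumRule} at a new node $i \in \mathcal{W}_{g+1}$ with $d_i^{(g+1)}=2$ and two neighbors $\Delta_i = \{p,q\} \subset \mathcal{V}_g$, in direct analogy with Lemmas~\ref{iDi}--\ref{newnodes}. This produces explicit expressions for $\Omega_{i,\Delta_i}^{(g+1)}$, for $\Omega_{ij}^{(g+1)}$ when $j\in\mathcal{V}_g$, and for $\Omega_{ij}^{(g+1)}$ when $j\in\mathcal{W}_{g+1}\setminus\{i\}$, each in terms of $\Omega^{(g+1)}$-values supported on $\mathcal{V}_g$, which can then be pulled back to $\Omega^{(g)}$-values through the old--old contraction. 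An induction on $g$ (base case: the triangle $\mathcal{F}_0$ with $\Omega_{ij}^{(0)} = 2/3$) then shows $\Omega_{ij}^{(g)} \le 3$ for every node pair, so the resistance diameter of $\mathcal{F}_g$ is bounded by a constant independent of $g$.

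Finally, substituting $E_g = 3^{g+1}$ and $N_g = 3(3^g+1)/2$ (so $2E_g/N_g \to 4$) together with the constant resistance-diameter bound into~\eqref{commuteR} gives $C(\mathcal{F}_g) \le 2 E_g \cdot 3 \cdot \log N_g = O(N_g \log N_g)$; combined with the universal lower bound $C(\mathcal{F}_g) \ge N_g \log N_g$ from~\eqref{commuteR} one concludes $C(\mathcal{F}_g) \sim N_g \log N_g$. The main obstacle I anticipate is the inductive step for new--new pairs: the sum rule produces cross-terms of the form $\Omega_{\Delta_i,\Delta_j}^{(g+1)}$ that must be passed through the $2/3$ old--old contraction while keeping the global bound at $3$, and one has to verify that the worst-case arithmetic does not overshoot this constant simultaneously for all three pair types.
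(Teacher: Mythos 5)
Your proposal is correct and follows essentially the same route the paper takes: the paper only sketches this proof by reference to the Apollonian argument, asserting the resistance diameter of $\mathcal{F}_g$ is at most $3$, and your Schur-complement computation (yielding $\Omega_{ij}^{(g+1)}=\tfrac{2}{3}\Omega_{ij}^{(g)}$ for old pairs via $\A_{g+1}^{\alpha,\beta}\A_{g+1}^{\beta,\alpha}=\D_g+\A_g$ and the $\tfrac{3}{2}\LL_g$ Schur complement), the sum-rule treatment of the remaining two cases, and the induction with base value $\Omega_{ij}^{(0)}=2/3$ supply exactly the omitted details. The worst-case arithmetic you flag does close: the new--new case yields precisely $\Omega_{ij}^{(g+1)}\le 3$, so the constant $3$ is attained but not exceeded, and combining with $2E_g\approx 4N_g$ and~\eqref{commuteR} gives $C(\mathcal{F}_g)=\Theta(N_g\log N_g)$ as claimed.
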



\section{Result Analysis}\label{analysis.sec}

In the previous sections, we have presented a systematic study on the cover time of random walks occurring on many real and model scale-free small-world networks. It was demonstrated that in all the networks under consideration, their cover time behaves with the node number $N$ as $N\log N$. This scaling is the same as that for the complete graph, the cover time of which is the smallest amongst all graphs with identical number of nodes. Thus, the studied networks are nearly optimal in the sense that their cover time has the minimal scaling. Since for random walks on a graph, its behavior heavily depends on the topology of the graph, we argue that the scale-free and small-world structure are responsible for the observed minimal scaling of cover time on the considered networks, which can be understood from the following heuristic explanations.

In a scale-free graph, there are a few nodes with large degree that are directly attached to many other nodes in the graph, which leads to the small-world phenomenon, with the average shortest path distance scaling at most logarithmically with the number $N$ of nodes~\cite{Ne03}. The synergy of scale-free and small-world properties strongly affects various quantities of random walks on graphs with these two features. For example, the average hitting time to a hub node behaves sublinearly with $N$~\cite{LiJuZh12}, while scales linearly to a small-degree node. In the context of cover time, as shown in~\eqref{commuteR} its upper bound is related to the resistance diameter. Recall that the resistance diameter of a graph is fully determined by the non-zero eigenvalues and their corresponding eigenvectors of its Laplacian matrix, which are in turn influenced by the structural properties of the graph. In the considered realistic and model scale-free networks, for a pair of ``remote'' nodes, there are many paths with different lengths, since they exhibit a nontrivial pattern with a number of cycles at  various scales~\cite{RoKiBoBe05,KlSt06}. As the networks grow, the resistance distance between any pair of existing nodes decreases, and the effective resistance between a new node and others is small and does not increase with the node number $N$. As a result, their resistance diameter does not depend on $N$, but converges to small constants. By~\eqref{commuteR}, their cover time is considerably small, scaling with $N$ as $N \log N$.

Note that in addition to scale-free and small-world properties, many real-world network systems also possess the community structure~\cite{GiNe02} and network motifs~\cite{MiShItKaChAl02}.  Below we show that although community structure~\cite{GiNe02} and motifs are ubiquitous in realistic systems, both of them are not necessary for the observed minimum scaling $N\log N$ of cover time. For example, the extended Sierpi{\'n}ski graphs~\cite{QiZh19} are iteratively constructed, with the complete graph  $\mathcal{K} _q$ of $q$ nodes being the basic building blocks. They thus have obvious community structure and network motifs. However, their cover time scale with network size $N$ as $N^{1+\frac{\log (q+2)}{ \log q}}\log N$, much larger than $N \log N$. Again for instance, for the modular network~\cite{FoBa07} obtained through replacing each node of a ring by the complete graph  $\mathcal{K} _q$, it is easy to verify that its  cover time behaves with $N$ as $N^2$ for large  $N$ and small $q$.

Based on the above augments, we conclude that the considered networks are almost optimal in the sense that their cover time has the minimal scaling, and we argue that their common scale-free small-world topology is responsible for the small cover time on these networks. Particularly, the two generating mechanisms, growth and preferential attachment, for the BA networks, are also the mechanisms common to a number of complex real networks, such as business networks and social networks~\cite{BaAl99}. Therefore, it is not surprising that the cover time of the studied real and model networks exhibits the same behavior, which is also consistent with our intuition.

\section{Conclusions}

The cover time is a central quantity for random walks on a graph, which has been applied to various areas, with its implication effects dependent on its behavior. It is established that amongst all graphs with identical size $N$, the complete graph is the unique optimal one possessing the minimum cover time, with the leading scaling being $N\log N$. Since complete graphs are dense, they cannot mimic realistic networks, most of which are sparse, scale-free and small-world, with small average degree, power-law degree distribution and small average distance. Thus far, the behavior of cover time for realistic networks has not been well understood. Particularly, we still lack rigorous results about cover time for model networks displaying the common properties observed for realistic systems.

In order to uncover the behavior of cover time on sparse real networks with scale-free small-world topology, in this paper, we presented an extensive empirical study on the cover time for a large variety of real scale-free small-world networks, which are abundant in computer science, physics, biology and social science. By using the link governing cover time with resistance diameter, we evaluated the cover time for the real-world networks concerned, which displays the $N\log N$ scaling with the number of nodes $N$. We also studied the cover time of three sparse  scale-free model networks: Barab{\'a}si-Albert networks, Apollonian networks,  and pseudofractal scale-free networks. For all these three networks, their cover time behaves as $N\log N$. Thus, the  minimal scaling for cover time of complete graphs can be reached in sparse scale-free small-world graphs with small constant average degree. This work enriches our understanding on the cover time in real-life networks, and provides useful insights into structure design of sparse networks with small cover time, as well as the design of algorithms related to cover time. 

It is worth mentioning that we only studied the behavior of cover time for undirected unweighted graphs by using its connection with resistance diameter.  In future work, we plan to devise efficient and effective algorithm for evaluating resistance diameter, in order to quickly determine the upper bound of cover time. On the other hand, since most real-world networks are directed and weighted, such as WWW and citation networks, future work should also include exploring the scalings of cover time in directed weighted graphs.

\bibliographystyle{ACM-Reference-Format}
\bibliography{main}

\end{document}